\pdfoutput=1
\documentclass[pra,aps,superscriptaddress,nofootinbib,twocolumn,longbibliography]{revtex4-2}
\usepackage{latexsym,amsmath,amssymb,amsfonts,graphicx,color,amsthm}
\usepackage{enumerate}
\usepackage{enumitem}
\usepackage{braket}
\usepackage{adjustbox}
\usepackage{footnote}
\usepackage[dvipsnames]{xcolor}
\usepackage{tipa}
\usepackage{textcomp}
\usepackage{fontenc}
\usepackage{mathtools}
\usepackage{mathrsfs}
\usepackage{color}
\usepackage{colortbl}
\usepackage{graphics,graphicx}
\usepackage{txfonts}
\usepackage{lipsum, babel}
\usepackage{bbm}
\usepackage{amsmath}
\usepackage{amssymb}

\usepackage[unicode=true,pdfusetitle, bookmarks=true,bookmarksnumbered=false,bookmarksopen=false, breaklinks=false,pdfborder={0 0 0},backref=false,colorlinks=false] {hyperref}
\hypersetup{ colorlinks,linkcolor=myurlcolor,citecolor=myurlcolor,urlcolor=myurlcolor}

\definecolor{myurlcolor}{rgb}{0,0,0.7}

\usepackage{float}
\usepackage{hyperref}
\newcommand{\be}{\begin{equation}}
\newcommand{\ee}{\end{equation}}
\newcommand{\bea}{\begin{eqnarray}}
\newcommand{\eea}{\end{eqnarray}}

\newcommand{\cC}{\mathcal{C}}

\newcommand{\cH}{\mathcal{H}}
\newcommand{\cI}{\mathcal{I}}
\newcommand{\cJ}{\mathcal{J}}
\newcommand{\cK}{\mathcal{K}}
\newcommand{\cL}{\mathcal{L}}

\newcommand{\cQ}{\mathcal{Q}}

\newcommand{\cS}{\mathcal{S}}

\newcommand{\Id}{\mathbbm{1}}
\newcommand{\tr}{\text{Tr}}

\newcommand{\comp}{\hbox{\hskip0.75mm$\circ\hskip-1.0mm\circ$\hskip0.75mm}}

\newtheorem{thm}{Theorem}
\newtheorem{proposition}{Proposition}
\newtheorem{lemma}{Lemma}

\newtheorem{definition}{Definition}

\begin{document}

\title{Genuine certification of incompatible quantum instruments through sequential communication tasks}

\author{Arindam Mitra}
\email{am56@iitbbs.ac.in}
\email{arindammitra143@gmail.com}
\affiliation{Department of Physics, School of Basic Sciences, Indian Institute of Technology Bhubaneswar, Odisha 752050, India}
\author{Satyaki Manna}
\email{mannasatyaki@gmail.com}
\affiliation{Department of Physics, School of Basic Sciences, Indian Institute of Technology Bhubaneswar, Odisha 752050, India}
\author{Debashis Saha}
\email{dsaha@iitbbs.ac.in}
\affiliation{Department of Physics, School of Basic Sciences, Indian Institute of Technology Bhubaneswar, Odisha 752050, India}


\begin{abstract}
Quantum instruments constitute the general description of quantum dynamics, encompassing both quantum measurements and quantum channels as special cases. Consequently, the incompatibility of quantum instruments represents a fundamental manifestation of nonclassicality in quantum theory. Here, we establish the operational significance of this notion by demonstrating communication tasks with classical inputs and outputs that enable the semi-device-independent certification of incompatible quantum instruments. We introduce a class of three-party communication tasks involving a sender, a relayer, and a receiver, and derive the tight upper bounds of the figure of merits of these tasks achievable by all compatible instruments implemented by the relayer. Furthermore, we show that these bounds coincide with the optimal performance attainable in a classical communication subject to the same dimensional constraints. Violation of this bound certifies the incompatibility of the pair of quantum instruments implemented by the relayer. This identifies certification of incompatible instruments as a manifestation of quantum advantage in communication.  This certification protocol is \emph{genuine} as it is able to certify the incompatibility of a pair of instruments where the set of induced measurements and the set of induced channels are compatible and, therefore it \emph{does not depend} on the incompability of measurements and channels induced by the instruments. Finally, we identify the simplest instance of our communication scenario that enable the certification of incompatible quantum instruments.
\end{abstract}

\maketitle



\section{Introduction}
Quantum incompatibility is one of the defining features that distinguishes quantum theory from its classical counterpart \cite{Heinosaari_incompatibility_review_2016,Uola_incomp_meas_review_2023}. It is defined as the impossibility of simultaneously implementing a set of devices on a single input state, where the devices may be measurements \cite{Heinosaari_incompatibility_review_2016,Uola_incomp_meas_review_2023}, channels \cite{Heinosaari_incompatibility_review_2016,haapasalo2015_incomp_chan,Heinosaari_incomp_channel_2017,Mori_incomp_chan_2020,girard2021jordan,Mitra_incomp_chan_2024,Miyadera_incomp_chan_2024,carmeli2019witnessing,Nechita2022fisher,guo2026irreversibility,d2022incompatibility}, instruments \cite{Mitra_comp_in_2022,Mitra_in_2023,Leppajarvi_incomp_in_2024,Ghai_in_comp_2025,d2022incompatibility,heinosaari2014stro_incomp,Chitambar_incomp_in_2024,Uola_incomp_in_2020,lever2016measure_msc_thesis,gudder2020finite,Hsieh_incomp_in_2024,sau_demul_2026,Buscemi2023unifyingdifferent}, testers \cite{Sedlak_incomp_tester_2016}, and other quantum devices. Although incompatibility may initially appear to be a limitation of the theory, it is now understood as a fundamental resource for several well-known information-theoretic tasks \cite{Heinosaari_incompatibility_review_2016,Uola_incomp_meas_review_2023,Buscemi_res_incomp_2020} and for the demonstration of several non-classicality \cite{Heinosaari_incompatibility_review_2016,Uola_incomp_meas_review_2023}. These include the demonstration of Bell nonlocality \cite{fine,wolf09} and quantum steering, and quantum advantage in some quantum state discrimination tasks \cite{Skrzypczyk19,Mori_incomp_chan_2020}, some quantum state exclusion tasks \cite{Uola_incomp_state_exclu}, quantum random access code \cite{Carmeli_2020,saha20} and other communication tasks \cite{saha23}. Although measurement incompatibility and channel incompatibility have been widely explored for a long time, instrument incompatibility has attracted significant attention only recently. In Refs. \cite{Mitra_comp_in_2022,Leppajarvi_incomp_in_2024}, it has been argued that parallel compatibility of a given pair of instruments can capture the notion of incompatibility of measurements, and channels induced by those instruments, and hence it is conceptually complete, unlike the other inequivalent notion of instrument compatibility, namely, traditional compatibility \cite{Mitra_comp_in_2022,heinosaari2014stro_incomp,Chitambar_incomp_in_2024,Uola_incomp_in_2020,lever2016measure_msc_thesis,gudder2020finite,Hsieh_incomp_in_2024,sau_demul_2026} which cannot capture the well-established notion of channel compatibility \cite{Heinosaari_incompatibility_review_2016,haapasalo2015_incomp_chan,Heinosaari_incomp_channel_2017,Mori_incomp_chan_2020,girard2021jordan,Mitra_incomp_chan_2024,Miyadera_incomp_chan_2024,carmeli2019witnessing,Nechita2022fisher,guo2026irreversibility,d2022incompatibility}. Therefore, in this work, we restrict ourselves to parallel (in)compatibility.

Over the years, considerable effort has been devoted to understanding the operational power of incompatible measurements through device-independent or semi-device-independent witnesses that depend purely on observed input-output statistics. A prominent line of research concerns the certification of measurement incompatibility, which can be inferred from the violation of Bell inequalities \cite{banik13,wolf09,loulidi,Zhu_2024} or demonstrated through Einstein–Podolsky–Rosen (EPR) steering \cite{uola14,quintino,sarkar22,chen16,banik15,bavaresco}. As mentioned above, beyond Bell-type scenarios, incompatible measurements have also been identified as a resource for achieving quantum advantages in prepare-and-measure communication tasks, thereby providing alternative methods for certifying incompatibility \cite{saha23,Skrzypczyk19,carmeli18,carmeli19,Guerini19,joannou,das26,Designolle_2019,uola21,Quintino19,degois, Vieira_2023}. In contrast, the certification of incompatibility for more general quantum devices, such as quantum channels and quantum instruments, remains unexplored.

As measurements and channels can be considered as special cases of quantum instruments, the certification of the incompatibility of instruments is of immense importance. Therefore, it is a matter of investigation whether one can design a class of instrument incompatibility certification protocols that encompass the certification of both measurement and channel incompatibility as special cases. Such a unified approach will be important not only from the foundational point of view but also for applications. Importantly, to be a \emph{genuine} certification protocol, it is better if the certification protocol \emph{does not depend} on the certification of the incompatibility of measurements and channels induced by the instruments. In this work, we are \emph{motivated} to design and study such a class of certification protocols.

The rest of the paper is organized as follows. In Sec. \ref{sec:prelim}, we discuss the preliminaries. In Sec. \ref{Sec:seq_comm_task}, and Sec. \ref{sec:certific_in}, we report the main results. More specifically, in Sec. \ref{Sec:seq_comm_task}, we report some generic results for   sequential communication tasks described in Fig. \ref{fig:communication_scenario}(a). In Sec.  \ref{sec:certific_in}, after explaining the $(d_A,d)$-sequential XOR task, we show the genuine certification of instrument incompatibility through this task for $d_A=d=2$. Then, we generalize our results for $d_A=d=d$ where any integer $d\geq 2$ and lastly, we discuss the genuine certification of the same in the minimal scenario where Alice has only three inputs. Finally, in Sec. \ref{sec:conc}, we summarize our results and discuss future directions. 

\section{Quantum measurements, channels and instruments and their compatibility}\label{sec:prelim}
A quantum measurement $M$ acting on $\cH$ is mathematically described by a set of positive semi-definite matrices i.e., $M:=\{M(a)\geq 0,M\in\cL(\cH)\}^{n}_{a=1}$ such that $\sum^{n}_{a=1}M(a)=\Id_{\cH}$ where $\cL(\cH)$ is the set of linear operators on Hilbert space $\cH$ and $\Id_{\cH}$ is the identity operator on Hilbert space $\cH$ \cite{Heinosaari_Ziman_book,Hayashi_book}. Measurements are also known as POVMs. The set $\Omega_{M}=\{1,\ldots,n\}$ is known as the outcome set of $M$ and each $M(a)$ is known as the POVM elements of $M$. We denote the set of all measurements on Hilbert space $\cH$ as $\mathscr{M}(\cH)$ and the set of all quantum states on Hilbert space $\cH$ as $\cS(\cH)$. If the measurement $M$ is implemented on a quantum system with an arbitrary quantum state $\rho\in\cS(\cH)$, the probability of an arbitrary outcome $a\in\Omega_M$ is given by $p(a|\rho,M)=\tr[\rho M(a)]$. Outcomes of an arbitrary measurement can be post processed. A measurement $M^{\prime}\in\mathscr{M}(\cH)$ is said to a post processing of $M\in\mathscr{M}(\cH)$ if for all $b\in\Omega_{M^{\prime}}$, $M^{\prime}(b)=\sum_{a\in\Omega_M}\nu_{ba} M(a)$ where $\nu_{ba}\geq 0\forall a,b$ and $\sum_{b\in\Omega_{M^{\prime}}}\nu_{ba}=1$. We denote it as $M^{\prime}\preceq M$. A pair of measurements $(M_1,M_2)\subset\mathscr{M}(\cH)$ is said to be compatible or jointly implementable if there exists a joint measurement $M\in\mathscr{M}(\cH)$ with outcome set $\Omega_{M}=\Omega_{M_1}\times\Omega_{M_2}$ such that for all $a\in\Omega_{M_1}$, $b\in\Omega_{M_2}$, the equations $M_1(a)=\sum_{q\in\Omega_{M_2}}M(a,q)$ and  $M_2(b)=\sum_{r\in\Omega_{M_1}}M(r,b)$ hold \cite{Heinosaari_incompatibility_review_2016,Uola_incomp_meas_review_2023}. Therefore, implementation of $M$ on a quantum system is equivalent to the simultaneous implementation of $M_1$ and $M_2$ on it. We denote it as $M_1\comp M_2$. It is known that if $M^{\prime}_1\preceq M_1$ and $M^{\prime}_2\preceq M_2$ then $M_1\comp M_2\Rightarrow M^{\prime}_1\comp M^{\prime}_2$.

A quantum channel $\Lambda:\cL(\cH)\rightarrow\cL(\cK)$ is mathematically defined as a completely positive trace preserving (CPTP) linear map \cite{Heinosaari_Ziman_book,Hayashi_book}. We denote the set of all quantum channels with input Hilbert space $\cH$ and output Hilbert space $\cK$ as $\mathscr{C}(\cH,\cK)$. Acting on a quantum state $\rho\in\cS(\cH)$, it provides an output state $\Lambda(\rho)\in\cS(\cK)$. We denote the identity channel acting on $\cL(\cH)$ as $\mathbbm{I}_{\cH}$. The action of a CP linear map $\Phi$ in Heisenberg picture, denoted as $\Phi^{\dagger}:\cL(\cH)\rightarrow\cL(\cK)$ is defined by the equation
$    \tr[\Lambda(X)Y]=\tr[X\Lambda^{\dagger}(Y)]~\forall X\in\cL(\cH), Y\in\cL(\cK).$ The linear map $\Phi^{\dagger}:\cL(\cK)\rightarrow\cL(\cH)$ is CP as $\Phi$ is CP. In Heisenberg picture, a quantum channel $\Lambda\in\mathscr{C}(\cH,\cK)$ acts on a measurement $M\in\mathscr{M}(\cK)$ as $\Lambda^{\dagger}[M]:=\{\Lambda^{\dagger}(M(a))\}\in\mathscr{M}(\cH)$. A pair of quantum channels $\Lambda_1\in\mathscr{C}(\cH,\cK_1)$ and $\Lambda_2\in\mathscr{C}(\cH,\cK_2)$ is said to be compatible if there exists a joint quantum channel $\Lambda:\mathscr{C}(\cH,\cK_1\otimes\cK_2)$ such that for an arbitrary quantum state $\rho\in\cS(\cH)$, we have $\Lambda_1(\rho)=\tr_{\cK_2}\Lambda(\rho)$ and $\Lambda_2(\rho)=\tr_{\cK_1}\Lambda(\rho)$ \cite{Heinosaari_incompatibility_review_2016,Heinosaari_incomp_channel_2017}. In short-hand notation, we denote these as $\Lambda_1=\tr_{\cK_2}\circ\Lambda$ and $\Lambda_2=\tr_{\cK_1}\circ\Lambda$. Such short-hand notation will be used throughout the paper.  The implementation of $\Lambda$ on a quantum system is equivalent to the simultaneous implementation of $\Lambda_1$ and $\Lambda_2$ on it. We denote it as $\Lambda_1\comp\Lambda_2$.

 A quantum instrument $\cI$ is defined as a set of CP trace-non increasing maps i.e., $\cI:=\{\Phi_b:\cL(\cH)\rightarrow\cL(\cK)\}^n_{b=1}$ such that $\sum^n_{b=1}\Phi_b=\Phi$ is a quantum channel \cite{Heinosaari_Ziman_book,Hayashi_book}. The outcome set of the instrument $\cI$ is $\Omega_{\cI}:=\{1,\ldots,n\}$. Each $b$ is known as the classical output of $\cI$ and given a quantum state $\rho\in\cS(\cH)$, it occurs with the probability $p(b|\cI,\rho)=\tr[\Phi_b(\rho)]$. The quantum output $\cI$, given $b$ is already obtained, is $\frac{\Phi_b(\rho)}{\tr[\Phi_b(\rho)]}$. Besides inducing the channel $\Phi$, instrument $\cI$ also induces a measurement $\hat{M}_{\cI}:=\{\hat{M}_{\cI}(b):=\Phi^{\dagger}_b(\Id_{\cK})\}$. We denote the set of all quantum instruments with input Hilbert space $\cH$ and output Hilbert space $\cK$ as $\mathscr{I}(\cH,\cK)$. It can be easily seen that quantum measurements and channels can be considered special cases of quantum instruments. A quantum instrument $\cI=\{\Phi_b\}\in\mathscr{I}(\cH,\cK)$ acts on a measurement $M\in\mathscr{M}(\cK)$ and provides the measurement $\cI^{\dagger}[M]:=\{\cI^{\dagger}[M](b,a):=\Phi^{\dagger}_b(M(a))\}\in\mathscr{M}(\cH)$. Clearly, $\Omega_{\cI^{\dagger}[M]}=\Omega_{\cI}\times\Omega_{M}$. A pair of instruments $\cI_1=\{\Phi^1_a\}\in\mathscr{I}(\cH,\cK_1)$ and $\cI_2=\{\Phi^2_b\}\in\mathscr{I}(\cH,\cK_2)$ is said to be (parallel) compatible if there exists a joint instrument $\cI=\{\Phi_{ab}\}\in\mathscr{I}(\cH,\cK_1\otimes\cK_2)$ with outcome set $\Omega_{\cI_1}\times\Omega_{\cI_2}$ such that the following holds:
$\Phi^1_a=\sum_{b\in\Omega_{\cI_2}}\tr_{\cK_2}\circ\Phi_{ab}~\forall a\in\Omega_{\cI_1},~\Phi^2_a=\sum_{a\in\Omega_{\cI_1}}\tr_{\cK_1}\circ\Phi_{ab}~\forall b\in\Omega_{\cI_2}.$ \cite{Mitra_comp_in_2022,Mitra_in_2023,Leppajarvi_incomp_in_2024,Ghai_in_comp_2025}.
 We denote it as $\cI_1\comp\cI_2$. Clearly, the implementation of $\cI$ is equivalent to the simultaneous implementation of $\cI_1$ and $\cI_2$. If a pair of instruments is compatible, their induced measurements, are compatible and induced channels are compatible \cite{Mitra_comp_in_2022}. In this work, we highlight that the converse is not true. It should be mentioned here that the notion of measurement-channel compatibility also exists in the literature. A measurement $M\in\mathscr{M}(\cH)$ and a channel $\Lambda\in\mathscr{C}(\cH,\cK)$ are said to be compatible if there exists a quantum instruments $\cI=\{\Phi_b\}\in\mathscr{I}(\cH,\cK)$ such that $M=\{\Phi^{\dagger}_b(\Id_{\cK})\}_{b\in\Omega_{\cI}=\Omega_{M}}$ and $\Lambda=\sum_b\Phi_b$ \cite{Heinosaari_incompatibility_review_2016,Mori_incomp_chan_2020,Heinosaari_meas_chan_comp_2018}. We denote it as $M\comp\Lambda$. 
\section{Sequential Communication Task}
\label{Sec:seq_comm_task}
First, let us establish the notation $[K]$ to represent the set $\{0, \cdots, K-1\}$ for any positive integer $K$.
 We define a sequential communication task involving three parties: Alice (the sender), Bob (the relayer), and Charlie (the receiver). In this task, Alice and Bob receive inputs \(x \in [n_A]\) and \(y \in [n_B]\), respectively. Upon receiving the input \(x\), Alice sends a \(d_A\)-dimensional classical or quantum system to Bob. After receiving both the input \(y\) and the message from Alice, Bob produces an output \(b_y=f(x,y)\in[d_y]\) and subsequently sends a \(d_B\)-dimensional classical or quantum system to Charlie. Charlie does not receive any input; based solely on the message received from Bob, Charlie produces an output \(c\in[d_c]\). The performance of the communication task is characterized by the set of conditional probability distributions \(\{p(b_y,c|x,y)\}\). Any figure of merit can be written as the linear function of these probabilities, which means, the figure of merit 
 \bea\label{fig_merit}
 \cS=\sum_{x,y,b_y,c}\alpha(x,y,b_y,c)p(b_y,c|x,y).
 \eea
In classical communication, the classical messages transmitted by Alice and Bob are denoted by $m_A$ and $m_B$ respectively. Any typical probability in this case can be expressed as
\bea\label{c_stat}
p(b_y,c|x,y)=\sum_{m_A,m_B} p_e(m_A|x)p_t(b_y,m_B|m_A,y)p_d(c|m_B).\label{Eq:prob_class_strat}
\eea
The individual encoding, transformation and decoding probability satisfies non-negativity and normalization conditions which reads as,
$\sum_{m_A} p_e(m_A|x)=\sum_{b_y,m_B} p_t(b_y,m_B|m_A,y)= \sum_{c}p_d(c|m_B)=1$. 
Similarly, the optimum classical figure of merit emerges as, 
\bea
\cS^C_{op}&=&\max_{\substack{\{p_e(m_A|x)\},\\ \{p_t(b_y,m_B|y,m_A)\},\\ \{p_d(c|m_B)\} }}\sum_{x,y,b_y,c,m_A,m_B}\alpha(x,y,b_y,c)\nonumber\\
&&\times p_e(m_A|x)p_t(b_y,m_B|y,m_A)p_d(c|m_B).
\eea
In quantum communication, Alice encodes her input into a quantum state $\rho_x\in\cS(\cH_A)$ where $\dim(\cH_A)=d_A$ and sends it to Bob. Depending on the input $y$, Bob implements an instrument from the set $\{\cI_y\}_y$ on it. Bob produces a classical output $b_y$ with probability $\tr[\Phi^y_{b_y}(\rho_x)]$ and transmits quantum output $\frac{\Phi^y_{b_y}(\rho_x)}{\tr[\Phi^y_{b_y}(\rho_x)]}\in\cS(\cH_B)$ to Charlie where $\dim(\cH_B)=d_B$. Having no input, Charlie performs a fixed measurement $N:=\{N(c)\}_{c=1}^{d_c}$ on the state sent by Bob. The expression of the probabilities reduces to, 
\bea\label{q_stat}
p(b_y,c|x,y)=\tr\left(\Phi^y_{b_y}(\rho_x)N(c)\right).
\eea
Consequently, optimum figure of merit,
\bea\label{S_q}
\cS^Q_{op}=\max_{\substack{\{\rho_x\in\cS(\cH_A)\},\\ \{\cI_y\in\mathscr{I}(\cH_A,\cH_B)|\Omega_{\cI_y}=d_y\},\\ \{N\in\mathscr{M}(\cH_B)|\Omega_{M}=d_c \}}}\sum_{x,y,b_y,c} \alpha(x,y,b_y,c)\tr\left(\Phi^y_{b_y}(\rho_x)N(c)\right).
\eea
We denote this sequential communication scenario by the set of natural numbers $(n_A,d_A,n_B,d_y,d_B,d_c)$.  Furthermore, both $\cS^C_{op}$ and $\cS^Q_{op}$ depend on these parameters.  However, for notational simplicity, we sometimes omit the explicit dependence of $\cS^C_{op}$ and $\cS^Q_{op}$ on these parameters when it is obvious from the context. For a given quantum strategy in a given task, whenever the figure of merit $\cS^Q>\cS^C_{op}$, we interpret this as a \emph{quantum advantage} by that quantum strategy in that task.

Now, we define the set of all classical statistics which is generated according to Eq. \eqref{c_stat} by 
\bea
\mathcal{C}(n_A,d_A,n_B,d_y,d_B,d_c):=\{p(b_y,c|x,y)\}.
\eea
Similarly, set of all probabilities generated by quantum communication (i.e., from Eq. \eqref{q_stat}) is given by 
\bea
\cQ(n_A,d_A,n_B,d_y,d_B,d_c):=\{p(b_y,c|x,y)\}.
\eea
 We can define another type of statistics generated in quantum communication when the instruments $\{\cI_y\}_y$ is compatible i.e, when $\{\cI_y\}_y$ is compatible  in Eq. \eqref{q_stat}. We call this
 \bea
\cQ^C(n_A,d_A,n_B,d_y,d_B,d_c):=\{p(b_y,c|x,y)\}.
\eea
The optimum figure of merit in this case is denoted by $\cS^{\cQ^\cC}_{op}$. The merit looks like \eqref{S_q}; the only difference is that the instruments are compatible. As compatible instruments form a subset of all quantum instruments, we can infer,
\be
\cQ^\cC(n_A,d_A,n_B,d_y,d_B,d_c)\subseteq \cQ(n_A,d_A,n_B,d_y,d_B,d_c).
\ee
Trivially, we know, 
\be
\cC(n_A,d_A,n_B,d_y,d_B,d_c)\subseteq \cQ(n_A,d_A,n_B,d_y,d_B,d_c).
\ee
\begin{figure}[hbt!]
    \centering
    \includegraphics[scale=0.29]{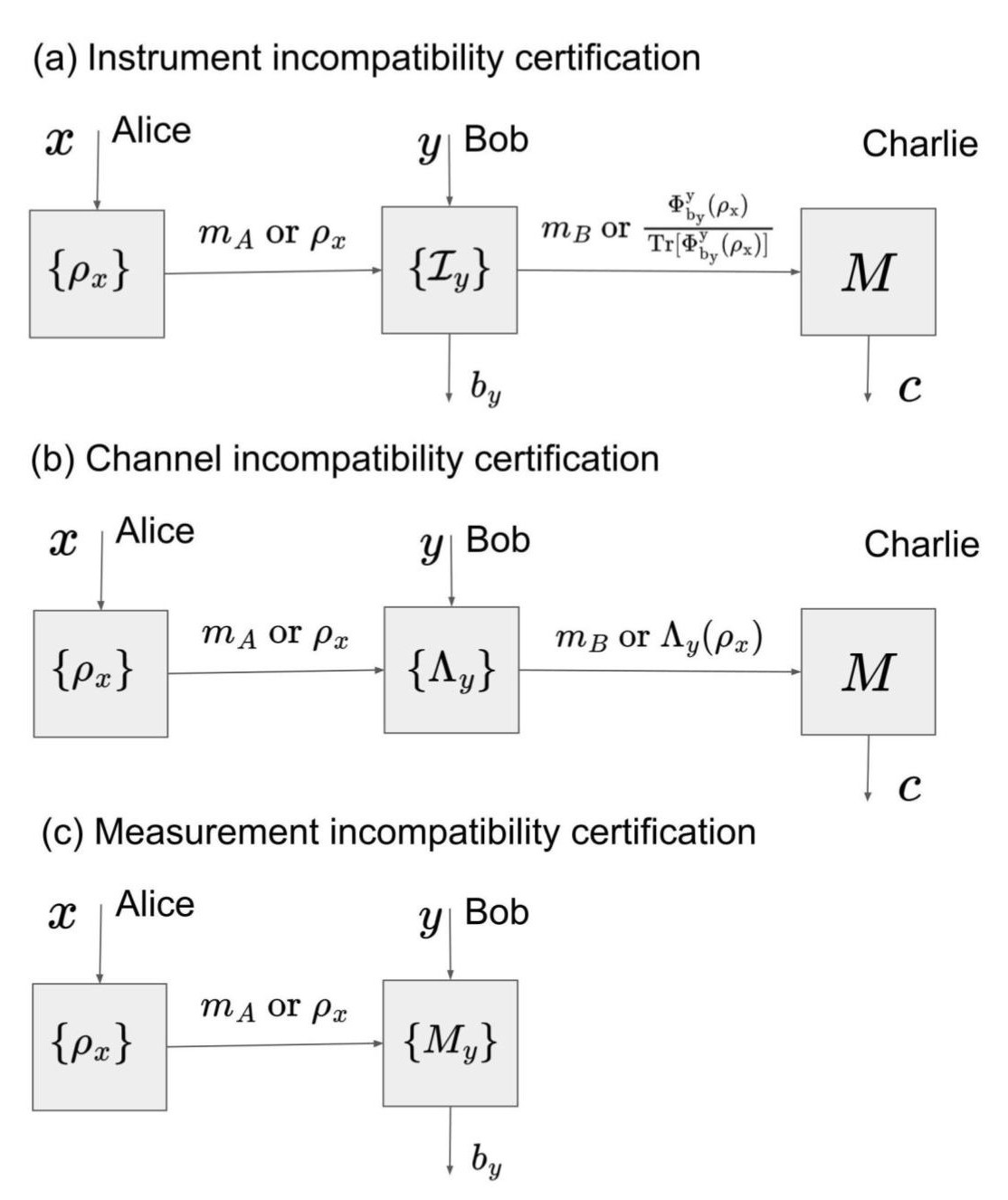}
    \caption{ Fig. (a) depicts the scenario of instrument incompatibility certification that we explore in this work. The scenarios depicted Fig. (b) and Fig. (c) are the scenarios of measurement and channel incompatibility certification scenarios respectively and are the special cases of the scenario depicted in Fig. (a) which is unified. Additionally, note that as Bob cannot send Charlie both $y$, and $b_y$ due to the dimensional restriction, in general, the scenario depicted in Fig. (a) is not equivalent to measure-and-prepare scenario, in general.}
    \label{fig:communication_scenario}
\end{figure}
Now we want to compare the set of classical statistics with the set of statistics coming from quantum communication with compatible instruments.
\begin{proposition}\label{cor1}
For any sequential communication task denoted by a set of natural numbers $(n_A,d_A,n_B,d_y,d_B,d_c)$,
     $\cC(n_A,d_A,n_B,d_y,d_B,d_c)$ $\subseteq \cQ^C(n_A,d_A,n_B,d_y,d_B,d_c)$. Consequently, for any figure of merit $\cS$, we have $\cS^\cC_{op}\leq \cS^{\cQ^\cC}_{op}.$
\end{proposition}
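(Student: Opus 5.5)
Given any classical strategy $\{p_e(m_A|x)\}$, $\{p_t(b_y,m_B|m_A,y)\}$, $\{p_d(c|m_B)\}$ achieving a point of $\cC(n_A,d_A,n_B,d_y,d_B,d_c)$, we embed it directly into a quantum strategy in which every object is diagonal in fixed computational bases. Then we check that the resulting family of Bob's instruments is (parallel) compatible. Since $m_A\in[d_A]$ and $m_B\in[d_B]$, we fix orthonormal bases $\{|m_A\rangle\}$ of $\cH_A$ and $\{|m_B\rangle\}$ of $\cH_B$. We define:
\begin{itemize}
\item Alice's states $\rho_x=\sum_{m_A}p_e(m_A|x)|m_A\rangle\langle m_A|$;
\item Charlie's measurement $N(c)=\sum_{m_B}p_d(c|m_B)|m_B\rangle\langle m_B|$;
\item Bob's instruments $\cI_y=\{\Phi^y_{b}\}$ with
\[
\Phi^y_{b}(X)=\sum_{m_A,m_B}p_t(b,m_B|m_A,y)\langle m_A|X|m_A\rangle\,|m_B\rangle\langle m_B| .
\]
\end{itemize}
Each $\Phi^y_b$ is completely positive, since it has Kraus operators $\sqrt{p_t(b,m_B|m_A,y)}\,|m_B\rangle\langle m_A|$. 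The normalization of $p_t$ makes $\sum_b\Phi^y_b$ trace preserving. Substituting into Eq. \eqref{q_stat} and using that everything is diagonal gives exactly Eq. \eqref{Eq:prob_class_strat}. So the statistics coincide.

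The main step is compatibility of $\{\cI_y\}_{y\in[n_B]}$. The idea is that the instruments are measure-and-prepare: each first measures in the $\{|m_A\rangle\}$ basis and then does classical post-processing. A joint instrument can therefore measure once and run all the classical channels $p_t(\cdot,\cdot|m_A,y)$ independently on copies of the classical outcome. Concretely, on $\cH_B^{\otimes n_B}$ with outcome tuple $\vec b=(b_0,\dots,b_{n_B-1})$, we set
\[
\Phi_{\vec b}(X)=\sum_{m_A}\langle m_A|X|m_A\rangle\bigotimes_{y}\Big(\sum_{m_B^{y}}p_t(b_y,m_B^{y}|m_A,y)|m_B^{y}\rangle\langle m_B^{y}|\Big).
\]
This is completely positive, and summing over $\vec b$ gives a trace-preserving map. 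Summing over all $b_{y'}$ with $y'\neq y$ and tracing out all factors but the $y$-th uses $\sum_{b,m_B}p_t(b,m_B|m_A,y')=1$. This recovers $\Phi^y_{b_y}$, which is the parallel-compatibility condition of Sec. \ref{sec:prelim}, extended in the obvious way to $n_B$ instruments. For $n_B=2$ it is literally the stated definition.

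Hence every classical behaviour lies in $\cQ^C(n_A,d_A,n_B,d_y,d_B,d_c)$. The inequality $\cS^\cC_{op}\le\cS^{\cQ^\cC}_{op}$ then follows because $\cS$ in Eq. \eqref{fig_merit} is a fixed linear functional. Maximizing it over a subset cannot exceed the maximum over the superset. I expect no real obstacle. The only care needed is the marginalization bookkeeping in the joint instrument and making sure the multi-instrument notion of compatibility is the intended one.
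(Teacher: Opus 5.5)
Your proposal is correct and takes the same route as the paper: you embed the classical strategy as diagonal states, measure-and-prepare instruments and a diagonal POVM, and you exhibit the joint instrument that measures once in the $\{|m_A\rangle\}$ basis and then applies the classical kernels $p_t(\cdot,\cdot|m_A,y)$ independently on separate tensor factors. The only difference is that you write the joint instrument for general $n_B$ and check complete positivity, trace preservation and the marginals explicitly, whereas the paper treats $n_B=2$ and calls the generalization straightforward.
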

\begin{proof}
    We have to prove that any probability distribution $\{p(b_y,c|x,y)\}$ obtained from any classical strategy (given in Eq. \eqref{Eq:prob_class_strat}) can be obtained using quantum strategy where Bob uses compatible quantum instruments. For notational simplicity, we prove it for $n_B=2$. The generalization of the results for arbitrary $n_B$ is straightforward. 

    Consider the orthonormal basis $\{\ket{m_A}\}$ of the Hilbert space $d_A$-dimensional Hilbert space $\cH_A$ and $\{\ket{m_B}\}$ of the Hilbert space $d_B$-dimensional Hilbert space $\cH_B$. Assume that Alice encodes her input $x$ as $\rho_x=\sum_{m_A}p_e(m_A|x)\ket{m_A}\bra{m_A}$, Bob uses a pair of quantum instruments $\cI_1$, and $\cI_2$ such that $\cI_{y}=\{\Phi^y_{b_y}\}$, where $\Phi^{y}_{b_y}(\rho)=\sum_{m_A,m_B}p_t(b_y,m_B|m_A,y)\tr[\rho\ket{m_A}\bra{m_A}]\ket{m_B}\bra{m_B}$ for an arbitrary $\rho\in\cS(\cH_A)$ and Charlie uses a measurement $M=\{M(c)=\sum_{m_B}p_d(c|m_B)\ket{m_B}\bra{m_B}\}\in\mathscr{M}(\cH_B)$. Then
    \begin{align}
        p(b_y,c|x,y)=&\tr[\Phi^y_{b_y}(\rho_x)M(c)]\nonumber\\
        =&\sum_{m_A,m_B} p_e(m_A|x)p_t(b_y,m_B|m_A,y)p_d(c|m_B).
    \end{align}

    Now, it is easy to see that $\cI_1$ and $\cI_2$ are compatible with the joint instrument $\cI=\{\Phi_{b_0b_1}\}$ where for an arbitrary $\rho\in\cS(\cH_A)$
    
    \begin{align}
        \Phi_{b_0b_1}(\rho)=&\sum_{m_A,m_B,m^{\prime}_B}p(b_0,m_B|m_A,0)p(b_1,m^{\prime}_{B^{\prime}}|m_A,1)\nonumber\\
        &\tr[\rho\ket{m_A}\bra{m_A}]  \ket{m_B}\bra{m_B}\otimes\ket{m^{\prime}_{B^{\prime}}}\bra{m^{\prime}_{B^{\prime}}},
    \end{align}
    where for all $b_0\in\{0,\ldots,d_0-1\}$ and $b_1\in\{0,\ldots,d_1-1\}$, $\{\ket{m^{\prime}_{B^{\prime}}}\}$ is a basis of Hilbert space $\cH_{B^{\prime}}$ where $\cH_B=\cH_{B^{\prime}}$ and $\ket{m_{B}}=\ket{m_{B^{\prime}}}$. The direct implication of this result indicates that $\cS^\cC_{op}\leq \cS^{\cQ^\cC}_{op}.$
\end{proof}

 From Proposition \ref{cor1}, we conclude that if $\cS^\cQ>\cS^\cC_{op}$, the incompatibility of the instruments implemented by Bob is certified.  It is clear that in the three party sequential communication task that we are considering in this work, Charlie has no input (see Fig. \ref{fig:communication_scenario}). But similar scenarios where Charlie has inputs have been studied in Refs. \cite{Miklin_self_test_2020,Mohan_self_test_2019,Bowles_test_dim_2015,Hameedi_EA_vs_QD_2017,Roy_R_cert_2026,Sasmal_share_adv_2026,Mukherjee21,SS23} in different contexts. The reason for considering no input scenario for Charlie is that if Charlie is allowed to have input, the quantum advantage can be obtained even if Bob implements compatible instruments in cases where Charlie implements incompatible measurements. Consequently, a quantum advantage in such a scenario would no longer be able certify the incompatibility of Bob's quantum instruments. We refer the readers to Appendix \ref{App_sec:Charlie}

\section{Certification of incompatible instruments}
\label{sec:certific_in}

\begin{definition}
    An instrument incompatibility certification protocol is said to be genuine if it is able to certify a set of incompatible instruments even when the set of induced measurements is compatible and the set of induced channels is compatible.
\end{definition}

 In other words, a genuine instrument incompatibility \emph{should not depend} on the incompatibility of the measurements and channels induced by the instruments. Now, it is easy to see that if the figure of merit depends only on  $\{p(c|x,y)\}$ with the condition $d_B\geq d_c$ or $\{p(b_y|x,y)\}$, there is no advantage when the set of induced measurements or the set of induced channels are compatible respectively.  Therefore, it is better if the figure of merit of a certification protocol involves $\{p(b_y,c|x,y)\}$ non-trivially in order to make it genuine. We discuss it in Appendix \ref{App_genuine_cert} in detail. Furthermore, in Sec \ref{Subsec:2,2_XOR_task}, Sec. \ref{Subsec:d,d_XOR_task}, and Sec \ref{Subsec:minimal_scenario} as we consider a communication task where $d_B=d_c$, we are forced to consider a figure of merit that involves $\{p(b_y,c|x,y)\}$ non-trivially.

Before going into our main claims, it is useful to discuss the following Proposition. 
\begin{proposition}
    If a pair of instruments $\cI_1=\{\Phi^1_x\}\in\mathscr{I}(\cH,\cK_1)$ and $\cI_2=\{\Phi^2_y\}\in\mathscr{I}(\cH,\cK_2)$ is compatible (i.e., $\cI_1\comp\cI_2$), then for an arbitrary pair of measurements $M_1\in\mathscr{M}(\cK_1)$ and $M_2\in\mathscr{M}(\cK_2)$, we have $\cI_1^{\dagger}[M_1]\comp\cI_2^{\dagger}[M_2]$. \label{Prop:ins_comp_channel_comp}
\end{proposition}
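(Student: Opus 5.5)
The plan is to write down an explicit joint measurement for $\cI_1^{\dagger}[M_1]$ and $\cI_2^{\dagger}[M_2]$, built from the joint instrument of $\cI_1$ and $\cI_2$. Since $\cI_1\comp\cI_2$, fix a joint instrument $\cI=\{\Phi_{ab}\}\in\mathscr{I}(\cH,\cK_1\otimes\cK_2)$ with the marginal conditions stated in Sec. \ref{sec:prelim}. These read $\Phi^1_a=\sum_b \tr_{\cK_2}\circ\Phi_{ab}$ and $\Phi^2_b=\sum_a\tr_{\cK_1}\circ\Phi_{ab}$ (the statement labels outcomes $x,y$; I use $a,b$). The outcome set of $\cI_1^{\dagger}[M_1]$ is $\Omega_{\cI_1}\times\Omega_{M_1}$, with elements $\Phi^{1\dagger}_a(M_1(c))$. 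Similarly, $\cI_2^{\dagger}[M_2]$ has elements $\Phi^{2\dagger}_b(M_2(e))$. The candidate joint measurement has outcome set $(\Omega_{\cI_1}\times\Omega_{M_1})\times(\Omega_{\cI_2}\times\Omega_{M_2})$ and elements
\begin{equation*}
G\big((a,c),(b,e)\big):=\Phi_{ab}^{\dagger}\big(M_1(c)\otimes M_2(e)\big).
\end{equation*}
Physically, this means implementing $\cI$ and then measuring the product measurement $M_1\otimes M_2$ on the bipartite output. Equivalently, $G=\cI^{\dagger}[M_1\otimes M_2]$, with the outcome pair relabelled.

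First I will check that $G$ is a valid POVM. Positivity holds because $\Phi^{\dagger}_{ab}$ is CP, hence positive, and $M_1(c)\otimes M_2(e)\geq 0$. Normalization follows from $\sum_{c,e}M_1(c)\otimes M_2(e)=\Id_{\cK_1\otimes\cK_2}$ together with $\sum_{a,b}\Phi^{\dagger}_{ab}(\Id)=\Id_{\cH}$, which is the trace preservation of $\sum_{ab}\Phi_{ab}$.

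Next I will verify the marginals. Summing over $e$ gives $\Phi^{\dagger}_{ab}(M_1(c)\otimes\Id_{\cK_2})$. The key identity is $(\tr_{\cK_2}\circ\Phi_{ab})^{\dagger}(X)=\Phi^{\dagger}_{ab}(X\otimes\Id_{\cK_2})$. It follows from the defining relation of the Heisenberg picture, since $\tr[\tr_{\cK_2}(\Phi_{ab}(\rho))X]=\tr[\Phi_{ab}(\rho)(X\otimes\Id)]$. Summing further over $b$ and using linearity of the adjoint gives $\sum_b(\tr_{\cK_2}\circ\Phi_{ab})^{\dagger}(M_1(c))=\Phi^{1\dagger}_a(M_1(c))$, which is the $(a,c)$ element of $\cI_1^{\dagger}[M_1]$. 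The second marginal follows symmetrically by summing over $(a,c)$.

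I expect no real obstacle; the only delicate point is the adjoint-of-partial-trace identity, which I will verify directly from trace duality. If the authors intend only a coarser reading of $\comp$, the same $G$ still serves, because compatibility is preserved under post-processing, as noted in Sec. \ref{sec:prelim}.
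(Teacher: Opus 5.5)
Your proposal is correct and matches the paper's proof. The paper uses the same joint POVM $M(x,y,s,t)=\Phi^{\dagger}_{xy}(M_1(s)\otimes M_2(t))$ built from the joint instrument. It gets positivity and normalization from complete positivity and unitality of $\sum_{x,y}\Phi^{\dagger}_{xy}$, and gets the marginals from $(\Phi^1_x)^{\dagger}(X)=\sum_y\Phi^{\dagger}_{xy}(X\otimes\Id_{\cK_2})$; your trace-duality check of this identity only spells out a step the paper states without derivation.
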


\begin{proof}
    As $\cI_1\comp\cI_2$, we have a quantum instrument $\cI=\{\Phi_{xy}\}_{(x,y)\in\Omega_{\cI_1}\times\Omega_{\cI_2}}\in\mathscr{I}(\cH,\cK_1\otimes\cK_2)$ such that $\Phi^1_x=\sum_y\tr_{\cK_2}\circ\Phi_{xy}$ and $\Phi^2_y=\sum_x\tr_{\cK_1}\circ\Phi_{xy}$. Therefore, for all $X\in\cL(\cK_1)$, we have $(\Phi^1_x)^{\dagger}(X)=\sum_y\Phi^{\dagger}_{xy}(X\otimes \Id_{\cK_2})$ and for all $Y\in\cL(\cK_2)$, we have $(\Phi^2_x)^{\dagger}(Y)=\sum_x\Phi^{\dagger}_{xy}(\Id_{\cK_1}\otimes Y)$. Consider a set of operators $M=\{M(x,y,s,t):=\Phi^{\dagger}_{xy}(M_1(s)\otimes M_2(t))\}$ where $x\in\Omega_{\cI_1}$, $y\in\Omega_{\cI_2}$, $s\in\Omega_{M_1}$ and $t\in\Omega_{M_2}$. As $\Phi^{\dagger}_{xy}$ is CP for all $x,y$ and $\Phi^{\dagger}:=\sum_{x,y}\Phi^{\dagger}_{xy}$ is CP unital, we have $M(x,y,s,t)\geq 0~\forall x,y,s,t$ and $\sum_{x,y,s,t}M(x,y,s,t)=\Id_{\cH}$. Hence, $M$ is a valid measurement. Now, for all $x\in\Omega_{\cI_1}$ and $s\in\Omega_{M_1}$ we have

    \begin{align}
        \sum_{y,t}M(x,y,s,t)=& \sum_y\Phi^{\dagger}_{xy}(M_1(s)\otimes \Id_{\cK_2})\nonumber\\
        =&(\Phi^1_{x})^{\dagger}(M_1(s)).
    \end{align}
     Similarly, for all $y\in\Omega_{\cI_2}$ and $t\in\Omega_{M_2}$ we have

    \begin{align}
        \sum_{x,s}M(x,y,s,t)=& (\Phi^2_{x})^{\dagger}(M_2(t)).
    \end{align}

    Hence, $M$ is the joint measurement of $\cI^{\dagger}_1[M_1]$ and $\cI^{\dagger}_2[M_2]$. Hence, we have $\cI_1^{\dagger}[M_1]\comp\cI_2^{\dagger}[M_2]$.  
\end{proof}
The inference of this proposition is true for any number of compatible instruments. We omit that proof as this is a straight-forward generalization of Proposition \ref{Prop:ins_comp_channel_comp}.

\begin{definition}
    An incompatible set of instruments is said to be instrinsically incompatible if the set of induced measurements is compatible and  the set of induced channels is compatible.
\end{definition}

Now, consider a pair of qubit binary quantum instruments $\cJ_0=\{\Psi^0_{b_0}\}^1_{b_0=0}$ and $\cJ_1=\{\Psi^1_{b_1}\}^1_{b_1=0}$ where $\Psi^0_0=\ket{0}\bra{0}\rho\ket{0}\bra{0}$, $\Psi^0_1=\ket{1}\bra{1}\rho\ket{1}\bra{1}$, $\Psi^1_0=\frac{1}{2}\rho$, $\Psi^1_1=\frac{1}{2}\sigma_z\rho\sigma_z$ where $\{\ket{0},\ket{1}\}$ is the eigen basis of the Pauli matrix $\sigma_z$. Note that the induced measurements are $M_0=\{\ket{0}\bra{0},\ket{1}\bra{1}\}$ and $M_1=\{\frac{1}{2}\Id,\frac{1}{2}\Id\}$ respectively. Therefore, $M_0\comp M_1$ as any measurement is compatible with a trivial measurement \cite{Heinosaari_incompatibility_review_2016}. Similarly, induced channels acting on the quantum state $\rho$ can be written as $\Psi_0(\rho)=\sum^1_{x=0}\ket{i}\bra{i}\rho\ket{i}\bra{i}=\frac{1}{2}\rho+\frac{1}{2}\sigma_z\rho\sigma_z=\Psi_1(\rho)$. As measure-and-prepare channels are self-compatible \cite{Heinosaari_incomp_channel_2017}, we have $\Psi_0\comp\Psi_1$. Thus the pair $(\cJ_0,\cJ_1)$ \emph{intrinsically incompatible}. Additionally, we note the following. We know that $M_0\comp\Psi_0$ through the instrument $\cJ_0$ and $M_1\comp\Psi_1$ through the instrument $\cJ_1$. But, as $\Psi_0=\Psi_1$, we have $M_0\comp\Psi_1$ and $M_1\comp\Psi_0$. But now we show that the pair $(\cJ_0,\cJ_1)$ is incompatible. Furthermore, one can prove that the set $\{M_0,M_1,\Psi_0,\Psi_1\}$ is simultaneously implementable by a \emph{different pair of instruments}. Consider two binary quantum instruments $\cJ^{\prime}_{0}=\{\Psi^{\prime 0}_i\}$, and $\cJ^{\prime}_{1}=\{\Psi^{\prime 1}_i\}$ such that

\begin{align}
    \Psi^{\prime 0}_{b_0}=\Psi^{ 0}_{b_0}~\forall b_0\in\{0,1\};~\Psi^{\prime 1}_{b_1}=\frac{1}{2}\Psi_{1}~\forall {b_1}\in\{0,1\}
\end{align} 

Clearly, both $\cJ^{\prime}_{0}$ and $\cJ^{\prime}_{1}$ are valid instruments where $\cJ^{\prime}_{0}$ implements $M_0$, and $\Psi_0$,  and $\cJ^{\prime}_{1}$ implements $M_1$, and $\Psi_1$. Therefore, if we show that there exists a joint quantum instrument $\cJ^{\prime}=\{\Phi^{\prime}_{b_0b_1}\}$ that simultaneously implements both $\cJ^{\prime}_{0}$ and $\cJ^{\prime}_{1}$ (i.e., if the pair is compatible) then the set $\{M_0,M_1,\Psi_0,\Psi_1\}$ is simultaneously implementable. We write down the action of this joint instrument of any quantum state $\rho$ in the following.
\begin{align}
\Phi^{\prime}_{00}(\rho)&=\Phi^{\prime}_{01}(\rho)=\frac{1}{2}\bra{0}\rho\ket{0}\ket{0}\bra{0}\otimes\ket{0}\bra{0}\\
\Phi^{\prime}_{10}(\rho)&=\Phi^{\prime}_{11}(\rho)=\frac{1}{2}\bra{1}\rho\ket{1}\ket{1}\bra{1}\otimes\ket{1}\bra{1}.
\end{align}

 Note that $\cJ^{\prime}$ is \emph{not} the joint instrument of $\cJ_1$ and $\cJ_2$.  Now, although we have shown that the set $\{M_0,M_1,\Psi_0,\Psi_1\}$ is simultaneously implementable (through $\cJ^{\prime}$), interestingly the following Lemma holds.

\begin{lemma}
    The pair $(\cJ_0,\cJ_1)$ is incompatible. \label{Lem:in_incomp_in}
\end{lemma}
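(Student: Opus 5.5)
The plan is to argue by contradiction using Proposition \ref{Prop:ins_comp_channel_comp}. I will choose local measurements $M_0$ on the output of $\cJ_0$ and $M_1$ on the output of $\cJ_1$. The choice is made so that the resulting measurements $\cJ_0^{\dagger}[M_0]$ and $\cJ_1^{\dagger}[M_1]$ admit, as post-processings, the sharp $\sigma_z$ and sharp $\sigma_x$ measurements respectively. Sharp measurements of $\sigma_z$ and $\sigma_x$ are incompatible: they are projective and non-commuting, and compatible projective measurements must commute. Section \ref{sec:prelim} states that compatibility is inherited by post-processings. Hence $\cJ_0^{\dagger}[M_0]$ and $\cJ_1^{\dagger}[M_1]$ cannot be compatible. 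Proposition \ref{Prop:ins_comp_channel_comp} would then be violated if $\cJ_0\comp\cJ_1$, so the pair must be incompatible.

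\emph{Step 1 (the $\cJ_0$ side).} Take $M_0$ to be any measurement, for instance the $\sigma_z$ measurement $\{\ket{0}\bra{0},\ket{1}\bra{1}\}$. Then $\cJ_0^{\dagger}[M_0](b_0,s)=(\Psi^0_{b_0})^{\dagger}(M_0(s))=\bra{b_0}M_0(s)\ket{b_0}\,\ket{b_0}\bra{b_0}$. Marginalising over $s$ returns $\{\ket{0}\bra{0},\ket{1}\bra{1}\}$. Thus the sharp $\sigma_z$ measurement $Z$ satisfies $Z\preceq\cJ_0^{\dagger}[M_0]$.

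\emph{Step 2 (the $\cJ_1$ side, the key step).} Take $M_1=\{\ket{+}\bra{+},\ket{-}\bra{-}\}$, the $\sigma_x$ eigenbasis. Since $\sigma_z$ is self-adjoint and swaps $\ket{\pm}$, we get
\begin{align}
\cJ_1^{\dagger}[M_1](0,a)=\tfrac12\ket{a}\bra{a},\qquad \cJ_1^{\dagger}[M_1](1,a)=\tfrac12\ket{\bar a}\bra{\bar a},
\end{align}
where $a\in\{+,-\}$ and $\bar a$ denotes the opposite sign. Neither marginal is informative, because each equals $\Id/2$ up to relabelling. The parity post-processing, however, is informative. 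Output ``$+$'' when $(b_1,a)\in\{(0,+),(1,-)\}$ and ``$-$'' otherwise. This gives exactly $\{\ket{+}\bra{+},\ket{-}\bra{-}\}$, so $X\preceq\cJ_1^{\dagger}[M_1]$ with $X$ the sharp $\sigma_x$ measurement. This step is where the correlation between Bob's classical outcome and his quantum output enters. It is the conceptual obstacle, since it explains why the induced measurements and channels alone cannot detect the incompatibility. Once the XOR post-processing is identified, the computation itself is routine.

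\emph{Step 3 (conclusion).} Suppose $\cJ_0\comp\cJ_1$. Proposition \ref{Prop:ins_comp_channel_comp} gives $\cJ_0^{\dagger}[M_0]\comp\cJ_1^{\dagger}[M_1]$. Closure of compatibility under post-processing (Sec. \ref{sec:prelim}) then yields $Z\comp X$. This contradicts the incompatibility of sharp $\sigma_z$ and $\sigma_x$. I will cite the standard fact that jointly measurable projective measurements commute, or verify it directly: a joint POVM $G(s,t)$ with projective margins satisfies $G(s,t)\le Z(s)$ and $G(s,t)\le X(t)$, hence $G(s,t)=0$ because $Z(s)$ and $X(t)$ are distinct rank-one projectors, which is impossible. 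Hence $(\cJ_0,\cJ_1)$ is incompatible.
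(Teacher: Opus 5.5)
Your proposal is correct and is essentially the paper's proof. The paper likewise applies the $\sigma_z$ and $\sigma_x$ measurements to the outputs of $\cJ_0$ and $\cJ_1$, notes that the sharp $\sigma_z$ and $\sigma_x$ measurements are post-processings of $\cJ_0^{\dagger}[\bar{M}_0]$ and $\cJ_1^{\dagger}[\bar{M}_1]$, and concludes via Proposition~\ref{Prop:ins_comp_channel_comp} and closure of compatibility under post-processing. Your explicit parity post-processing, with the correct labelling $\cJ_1^{\dagger}[M_1](1,a)=\tfrac12\ket{\bar a}\bra{\bar a}$ (which the paper's listing glosses over), and your rank-one argument for the incompatibility of $Z$ and $X$ supply details the paper leaves implicit.
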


\begin{proof}
    Consider two qubit projective measurements $\bar{M}_0=\{\ket{0}\bra{0},\ket{1}\bra{1}\}$ and $\bar{M}_1=\{\ket{+}\bra{+},\ket{-}\bra{-}\}$ where $\{\ket{0},\ket{1}\}$ is the eigen basis of $\sigma_z$ and $\{\ket{+}.\ket{-}\}$ is the eigen basis of $\sigma_x$. Now, $\cJ^{\dagger}_0[\bar{M}_0]=\{\ket{0}\bra{0},0,0,\ket{1}\bra{1}\}$ and $\cJ^{\dagger}_1[\bar{M}_1]=\{\frac{1}{2}\ket{+}\bra{+},\frac{1}{2}\ket{-}\bra{-},\frac{1}{2}\ket{+}\bra{+},\frac{1}{2}\ket{-}\bra{-}\}$. Note that $\bar{M}_0\preceq\cJ^{\dagger}_0[\bar{M}_0]$, $\bar{M}_1\preceq\cJ^{\dagger}_1[\bar{M}_1]$. Therefore, as the pair $(\bar{M}_0,\bar{M}_1)$ is incompatible, the pair $(\cJ^{\dagger}_0[\bar{M}_0], \cJ^{\dagger}_1[\bar{M}_1])$ is incompatible. Therefore, from Proposition \ref{Prop:ins_comp_channel_comp}, we obtain that the pair $(\cJ_0,\cJ_1)$ is incompatible.
\end{proof}
 We describe the peculiarity of the pair $(\cJ_1,\cJ_2)$ in Figure \ref{fig:1}.
\begin{figure}
    \centering
    \includegraphics[width=1.0\linewidth]{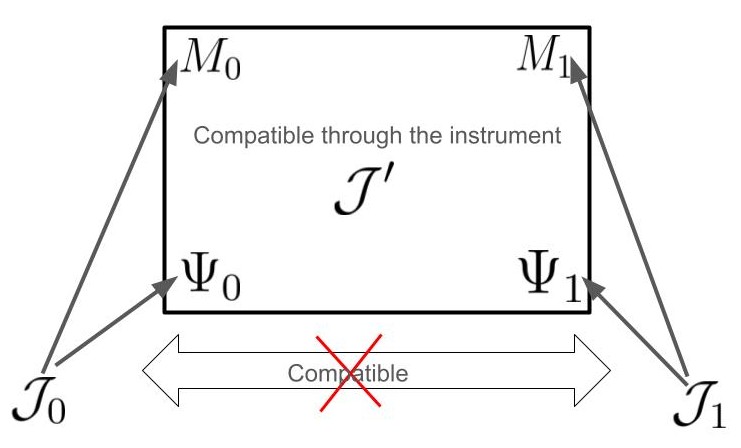}
    \caption{The instrument $\cJ_0$ induces a channel $\Psi_0$ and a measurement $M_0$. Similarly, the instrument $\cJ_1$ induces channel $\Psi_1$ and measurement $M_1$. The set $\{M_0,M_1,\Psi_0,\Psi_1\}$ is simultaneously implementable through the instrument $\cJ^{\prime}$. But $\cJ_0$ and $\cJ_1$ are \emph{incompatible}. Therefore, any certification protocol that is able to certify the incompatibility of $\cJ_0$ and $\cJ_1$, \emph{does not depend} on the incompatibility of induced measurements and channels and therefore, it is \emph{genuine}. Additionally, note that $\cJ^{\prime}$ is \emph{not} the joint instrument of $\cJ_1$ and $\cJ_2$. }
    \label{fig:1}
\end{figure}
Now, to certify a pair of incompatible instruments, we introduce a specific sequential communication task as demonstrated below.
\subsection{$\boldsymbol{(d_A,d)}$-sequential XOR task}
Consider a sequential task which is characterized by natural numbers $(n_A=2^{d},d_A,n_B=2,d_y=d,d_B=d_A,d_c=d).$  Alice is given a string of $2$ dits, denoted by $x = x_0 x_1$, chosen uniformly at random from the set of all 
  $2^{d}$ number of strings. Each component $x_y$ belongs to the alphabet $[d]$ for every $y \in [2]$. Alice sends $d_A$-dimensional message to Bob.
  After receiving the message from Alice, Bob produces an output $b_y$ given an input $y$ which is chosen uniformly at random. Then Bob sends the message of dimension $d_B$ to Charlie. After receiving the message from Bob Charlie produce $d$ possible outcomes. Here, Charlie has no inputs. The task is successful if $b_y=x_y\ominus_d(c.y)$. Mathematically, the figure of merit of this task is given by, 
 \bea\label{Eq:22xor}
\cS(d_A,d)&=&\frac{1}{2d^2}\sum_{x,y,b_y,c}\delta_{b_y,x_y\ominus_d(c.y)}p(b_y,c|x,y)\nonumber\\
&=& \frac{1}{2d^2}\sum_{x,y,c}p(b_y=x_y\ominus_d(c.y),c|x,y),
 \eea
where $\ominus_d$ denotes subtraction modulo $d$. As the task depends on the parameters $d_A$ and $d$ (all other parameters are functions of $d_A$ or $d$), we refer it as the 
$(d_A,d)$-sequential XOR task.

Now, we show that this sequential communication task can certify the incompatibility even for an intrinsically incompatible pair of instruments $(\cJ_0,\cJ_1)$ implemented by Bob. The certification of incompatibility in such an extreme case implies that this certification protocol \emph{does not depend} on the incompatibility of measurements and channels induced by the given pair of instruments and therefore, it is \emph{genuine}.
\subsection{Certification through $\boldsymbol{(2,2)}$-sequential XOR task}
\label{Subsec:2,2_XOR_task}
At first, for the convenience of the readers, we report our results for the simplest case, i.e., for $d_A=2$ and $d=2$. 

\begin{proposition}\label{prop3_1}
     In $(2,2)$ sequential XOR task, if Bob implements two compatible instruments, the optimum figure of merit $\cS^{\cQ^\cC}_{op}(2,2)= 3/4$. Furthermore, the optimum figure of merit using the classical strategy $\cS^\cC_{op}(2,2)=\cS^{\cQ^\cC}_{op}(2,2)$. 
\end{proposition}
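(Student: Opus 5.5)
The plan is to prove two things: that the classical value satisfies $\cS^\cC_{op}(2,2)\geq 3/4$, and that the compatible-instrument value satisfies $\cS^{\cQ^\cC}_{op}(2,2)\leq 3/4$. Proposition \ref{cor1} gives $\cS^\cC_{op}\leq \cS^{\cQ^\cC}_{op}$, so together these yield $3/4\leq \cS^\cC_{op}(2,2)\leq \cS^{\cQ^\cC}_{op}(2,2)\leq 3/4$ and hence both equalities.

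\emph{Classical lower bound.} I will use the following explicit strategy. Alice sends the bit $m_A=x_0$. For $y=0$, Bob outputs $b_0=m_A$ and forwards an arbitrary bit, say $m_B=0$. For $y=1$, Bob outputs a uniformly random bit $b_1$ and forwards $m_B=0$. Charlie outputs $c=m_B$. Then the case $y=0$ succeeds with certainty. For $y=1$ the success condition reads $b_1=x_1\oplus c$, and since $b_1$ is uniform this holds with probability $1/2$. Averaging over $y$ gives $\frac12(1+\frac12)=3/4$.

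\emph{Upper bound for compatible instruments.} Let $\cI=\{\Phi_{b_0b_1}\}$ be a joint instrument of $\cI_0$ and $\cI_1$, and let $N$ be Charlie's measurement. As in the proof of Proposition \ref{Prop:ins_comp_channel_comp}, define the single POVM $G(b_0,b_1,c_0,c_1)=\Phi^{\dagger}_{b_0b_1}(N(c_0)\otimes N(c_1))$ on $\cH_A$. Then $p(b_y,c|x,y)$ is the marginal of $G$ on the coordinates $(b_y,c_y)$. Coarse-graining $G$ to the outcome $(g_0,g_1):=(b_0,\,b_1\oplus c_1)$ gives a four-outcome POVM $E$ with
\begin{equation*}
\cS=\frac18\sum_{x}\sum_{g_0,g_1}\big(\delta_{g_0,x_0}+\delta_{g_1,x_1}\big)\tr[\rho_xE(g_0,g_1)].
\end{equation*}
Thus the compatible-instrument value is bounded by a qubit prepare-and-measure game in which one fixed measurement must output guesses of both bits simultaneously. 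Equivalently, this is a $2\to1$ random access code played with a compatible pair of measurements, namely the two marginals of $E$.

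\emph{Bounding the reduced game.} To bound this game by $3/4$, I would first try to invoke the Frenkel--Weiner theorem. It says that the statistics of any qubit prepare-and-measure experiment with fixed measurements can be reproduced by one classical bit plus shared randomness. The figure of merit is linear, so it suffices to check deterministic strategies. A one-bit message allows at most two distinct guess pairs $(g_0,g_1)$. Each of the four $x$ then scores at most the maximum number of matched bits over these two pairs, and an elementary enumeration shows the total is at most $6$ out of $8$, i.e.\ $3/4$.

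If a self-contained argument is preferred, I would instead write the value as $\frac18\sum_x\lambda_{\max}(A_x)$ with $A_x=P_0^{x_0}+P_1^{x_1}$, where $P_0,P_1$ are the (compatible) marginals of $E$. I would then use $A_x+A_{\bar x}=2\Id$ and the Bloch parametrisation $P_i^{0}=a_i\Id+\vec b_i\cdot\vec\sigma$. The required inequality reduces to $|\vec b_0+\vec b_1|+|\vec b_0-\vec b_1|\leq 1$, which is exactly a joint-measurability condition for two binary qubit POVMs. This last step, proving the bound from compatibility for possibly biased marginals, is where I expect the main difficulty. That is why I would lean on the Frenkel--Weiner reduction first.
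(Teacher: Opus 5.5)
Your proof is correct, but it closes the upper bound by a different route from the paper's. The paper uses the same joint POVM as you do (the joint measurement of $\cI_0^\dagger[M]$ and $\cI_1^\dagger[M]$ furnished by Proposition~\ref{Prop:ins_comp_channel_comp}). It then bounds each of the four operator norms directly. The joint outcomes counted by both the $y=0$ and the $y=1$ success events are split off and bounded by their trace, and the remainder is bounded by $\|\Id\|$. Because these doubly counted sets partition the outcome set as $x$ ranges over the four strings, the total is $\frac18(\tr\Id+4)=\frac34$.

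You instead coarse-grain to a single four-outcome POVM $E$, invoke Frenkel--Weiner, and enumerate one-bit classical strategies. Your route buys generality. Whenever $d_B\ge d_c$, Bob can classically sample $(b_0,b_1,c_0,c_1)$ and forward $c_y$. The same argument therefore gives $\cS^{\cQ^\cC}_{op}=\cS^\cC_{op}$ for \emph{every} linear figure of merit in this no-input-for-Charlie scenario. The ``furthermore'' part thus becomes structural rather than a coincidence of two separately computed numbers; the paper uses exactly this idea for channels in Appendix~\ref{App_genuine_cert}. The price is reliance on a deep external theorem, plus a separate classical optimisation to obtain the value. The paper's estimate, by contrast, is elementary and self-contained, and it carries over verbatim to the $(d,d)$ task to give $\frac12(1+\frac1d)$.

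The difficulty you anticipate in your self-contained fallback disappears if you apply the paper's trick to your reduced game. With $\bar x_i=1\oplus x_i$, you have $A_x=2E(x_0,x_1)+E(x_0,\bar x_1)+E(\bar x_0,x_1)\le E(x_0,x_1)+\Id$. Hence $\lambda_{\max}(A_x)\le \tr E(x_0,x_1)+1$, and summing over $x$ gives $6$. No Bloch parametrisation or joint-measurability criterion for biased effects is needed.
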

\begin{proof}
Consider the  compatible instruments implemented by Bob to be $\cI_y=\{\Phi^y_{b_y}\}$, where $y=0,1$. The  measurement implemented by Charlie is $M := M(c)$ and the inputs of Alice is encoded to the states $\{\rho_{x_0x_1}\}$.
      The instruments $\{\cI_y\}_y$ acts on the measurement $M$ and provides the measurements $M_0:=\cI^{\dagger}_0[M]=\{M_0(b_0,c)=(\Phi^0_{b_0})^{\dagger}(M(c))\}$ and $M_1:=\cI^{\dagger}_1[M]=\{M_1(b_1,c)=(\Phi^1_{b_1})^{\dagger}(M(c))\}$. Since $\cI_0\comp\cI_1$, it follows that $M_0\comp M_1$. Therefore, there exists a joint measurement $M=\{M(b_0,c,b_1,c^{\prime})\}$ such that
$M_0(b_0,c)=\sum_{b_1,c^{\prime}}M(b_0,c,b_1,c^{\prime})$
and $M_1(b_1,c^{\prime})=\sum_{b_0,c}M(b_0,c,b_1,c^{\prime})
$. In this case, from  Eq. \eqref{Eq:22xor}, the figure of merit can be written as,
\bea\label{xor_sqc}
&& \cS^{\cQ^\cC}(2,2)\nonumber\\
&=&\frac18\Bigg(\rho_{00}( M_0(0,0)+M_0(0,1)+M_1(0,0)+M_1(1,1))\nonumber\\ 
&&+ \rho_{01} ( M_0(0,0)+M_0(0,1)+M_1(1,0)+M_1(0,1))\nonumber\\
&&+ \rho_{10}( M_0(1,0)+M_0(1,1)+M_1(0,0)+M_1(1,1))\nonumber\\
&&+ \rho_{11}( M_0(1,0)+M_0(1,1)+M_1(1,0)+M_1(0,1))\Bigg)\nonumber\\
&\leq& \frac18\Bigg(|| M_0(0,0)+M_0(0,1)+M_1(0,0)+M_1(1,1)||\nonumber\\ 
&&+ || M_0(0,0)+M_0(0,1)+M_1(1,0)+M_1(0,1)||\nonumber\\
&&+ || M_0(1,0)+M_0(1,1)+M_1(0,0)+M_1(1,1)||\nonumber\\
&&+ || M_0(1,0)+M_0(1,1)+M_1(1,0)+M_1(0,1)||\Bigg),
\eea
 where $||.||$ denotes the operator norm which is the highest eigenvalue for any positive semidefinite matrix. Note that for notational simplicity, we have not written the dependence of the above-said figure of merit on $\{\rho_x\}$, $\{\cI_y\}$ and $M$ as it is obvious from the context. 
Now, we look at the first term. We can write it in term of joint measurement $M$. Therefore, the first term would look like,
\bea
&&||\sum_{b_1,c'} (M(0,0,b_1,c')+M(0,1,b_1,c'))\nonumber\\
&&+\sum_{b_0,c} (M_1(b_0,c,0,0)+M_1(b_0,c,1,1))||\nonumber\\
&\leq& || M(0,0,0,0) + M(0,1,0,0) + M(0,0,1,1)\nonumber\\
&&+ M(0,1,1,1)||+||\sum_{b_1,c'} (M(0,0,b_1,c')+M(0,1,b_1,c'))\nonumber\\
&&+\sum_{(b_0,c)\neq (0,0),(0,1)} (M_1(b_0,c,0,0)+M_1(b_0,c,1,1)) ||\nonumber\\
&&\leq \tr[M(0,0,0,0) + M(0,1,0,0) + M(0,0,1,1)\nonumber\\
&&+ M(0,1,1,1)]
+ ||\sum_{b_1,c'} (M(0,0,b_1,c')+M(0,1,b_1,c'))\nonumber\\
&&+\sum_{(b_0,c)\neq (0,0),(0,1)} (M_1(b_0,c,0,0)+M_1(b_0,c,1,1)) ||\nonumber\\
&\leq& \tr[M(0,0,0,0) + M(0,1,0,0) + M(0,0,1,1)\nonumber\\
&&+ M(0,1,1,1)] + ||\Id_{\cH} ||
\eea
Clearly, similar inequalities can easily be obtained for the other three terms of \eqref{xor_sqc}. Consequently, we find,
\bea
&&\cS^{\cQ^\cC}(2,2)\nonumber\\
&\leq & \frac18\sum_{b_0,c,b_1,c'} \tr(M(b_0,c,b_1,c'))+ \frac18 4||\Id_{\cH}||\nonumber\\
&=& \frac18\tr(\Id_{\cH}) +\frac12\nonumber\\
&=& \frac{3}{4}.
\eea
From Proposition \ref{cor1}, we know that if a probability distribution $\{p(b_y,c|x,y)\}$ is achieved by a classical strategy, it must be achieved by a quantum strategy where the instruments implemented by Bob are compatible and therefore, $\cS^\cC_{op}(2,2)\leq\cS^{\cQ^\cC}_{op}(2,2)$. Therefore, to prove $\cS^\cC_{op}(2,2)=\cS^{\cQ^\cC}_{op}(2,2)=\frac{3}{4}$, it is enough to show that there exists a classical strategy that achieves the figure of merit $\frac{3}{4}$. 
Consider the following classical strategy: $m_A=x_0; b = m_A, \forall y; c=0, \forall m_B$
With this classical strategy, the figure of merit given in Eq. \eqref{Eq:22xor} is $\frac{3}{4}$.
\end{proof}

\begin{thm}
 The quantum advantage in the $(2,2)$-sequential XOR task implies that the pair of instruments implemented by Bob is incompatible.
\end{thm}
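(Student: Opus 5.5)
The argument is a contrapositive reading of Proposition \ref{prop3_1}, combined with Proposition \ref{cor1}. By definition, a quantum advantage in the $(2,2)$-sequential XOR task means that some quantum strategy $(\{\rho_{x_0x_1}\},\{\cI_0,\cI_1\},M)$ attains a figure of merit $\cS^{\cQ}(2,2)>\cS^{\cC}_{op}(2,2)$. I will assume, towards a contradiction, that the pair $(\cI_0,\cI_1)$ used by Bob is compatible. Then the statistics $\{p(b_y,c|x,y)\}$ of this strategy lie in $\cQ^{\cC}(4,2,2,2,2,2)$, so its figure of merit is at most $\cS^{\cQ^{\cC}}_{op}(2,2)$.

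Proposition \ref{prop3_1} gives $\cS^{\cQ^{\cC}}_{op}(2,2)=3/4=\cS^{\cC}_{op}(2,2)$. Recall how its upper bound was obtained. Compatibility of $\cI_0,\cI_1$ gives compatibility of $\cI_0^{\dagger}[M]$ and $\cI_1^{\dagger}[M]$ by Proposition \ref{Prop:ins_comp_channel_comp}. The four operator-norm terms are then split using the joint measurement, and the resulting trace bound yields $\frac18\tr(\Id_{\cH})+\frac12=\frac34$ for $d_A=2$. Chaining these facts gives
\begin{align}
\cS^{\cQ}(2,2)\leq\cS^{\cQ^{\cC}}_{op}(2,2)=\cS^{\cC}_{op}(2,2),
\end{align}
which contradicts the assumed strict inequality. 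Hence $(\cI_0,\cI_1)$ must be incompatible.

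I do not expect a real obstacle, since the hard analytic step (the $3/4$ bound for compatible instruments) is already Proposition \ref{prop3_1}. The point that needs care is that the bound holds for every compatible pair, with arbitrary Alice states and arbitrary Charlie measurement. The proof of Proposition \ref{prop3_1} maximized over all of these, so the upper bound is uniform.

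Although not needed for the implication, I would add a concrete example showing that an advantage can occur, so the statement is not vacuous. Take Bob's pair to be $(\cJ_0,\cJ_1)$ of Lemma \ref{Lem:in_incomp_in}, with Charlie measuring in the $\sigma_x$ basis and Alice sending suitable qubit states. One then checks that the value exceeds $3/4$. For this pair the induced measurements and channels are compatible, so such an advantage shows the certification is genuine.
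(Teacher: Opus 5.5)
Your proposal is correct and follows the same route as the paper. Both arguments use Proposition \ref{prop3_1}, which gives $\cS^{\cQ^\cC}_{op}(2,2)=\cS^\cC_{op}(2,2)=3/4$, to conclude that a strategy exceeding the classical optimum cannot involve a compatible pair. Strictly, only the compatible upper bound $3/4$ and the explicit classical strategy attaining $3/4$ are needed; Proposition \ref{cor1} supplies the reverse inequality, which is harmless but not used in the implication.
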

\begin{proof}
     From Proposition \ref{prop3_1}, we know that $\cS^\cC_{op}(2,2)=\cS^{\cQ^\cC}_{op}(2,2)$. Therefore, if the figure of merit using a pair instruments implemented by Bob is strictly greater than $\cS^\cC_{op}(2,2)$, it follows that the pair must be incompatible. Hence, a quantum advantage over the classical merit serves as a certification of the incompatibility of Bob's instruments. In other words, the incompatibility of instruments is necessary for the quantum advantage in $(2,2)$-XOR task.
\end{proof}

In the next proposition, we show that one can certify the incompatibility of the \emph{intrinsically incompatible} pair of instruments $(\cJ_0,\cJ_1)$ through the $(2,2)$-sequential XOR task. Therefore, it \emph{does not depend} on the incompatibility of the set of induced measurements, and the set of induced channels induced by those instruments and hence, this is a \emph{genuine} instrument incompatibility certification protocol.

\begin{proposition}\label{prop3}
 In the $(2,2)$-sequential XOR task, the optimum figure of merit $\cS^{Q}_{op}(2,2)=\frac{1}{2}(1+\frac{1}{\sqrt{2}})>\frac{3}{4}$. Furthermore, it is achieved by the intrinsically incompatible pair of qubit instruments $(\cJ_0,\cJ_1)$ implemented by Bob and therefore, the incompatibility of $(\cJ_0,\cJ_1)$ is certified.
\end{proposition}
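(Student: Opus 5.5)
The plan has two parts: an explicit strategy using $(\cJ_0,\cJ_1)$ that attains $\frac{1}{2}(1+\frac{1}{\sqrt{2}})$, and a matching upper bound on $\cS^Q_{op}(2,2)$ over all strategies. The strict inequality with $\frac{3}{4}$ then certifies incompatibility by Proposition \ref{prop3_1} and the preceding Theorem, since $\frac{1}{2}(1+\frac{1}{\sqrt{2}})\approx 0.854$.

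\emph{Achievability.} Bob implements $\cJ_0$ for $y=0$ and $\cJ_1$ for $y=1$, and Charlie measures in the $\sigma_x$ eigenbasis, with $c=0\leftrightarrow\ket{+}$ and $c=1\leftrightarrow\ket{-}$.
\begin{itemize}
\item For $y=0$ the winning condition is $b_0=x_0$ for every $c$. Summing over $c$ gives $\sum_c p(b_0=x_0,c|x,0)=\bra{x_0}\rho_x\ket{x_0}$, because $\cJ_0$ induces the $\sigma_z$ measurement.
\item For $y=1$ the winning condition is $b_1\oplus c=x_1$. The instrument $\cJ_1$ applies $\Id$ (outcome $b_1=0$) or $\sigma_z$ (outcome $b_1=1$), each with probability $\frac12$. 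Conjugating by $\sigma_z$ swaps $\ket{+}$ and $\ket{-}$, so $b_1\oplus c$ is exactly the $\sigma_x$-outcome of $\rho_x$.
\item The winning probability is therefore $\bra{e_{x_1}}\rho_x\ket{e_{x_1}}$, where $\ket{e_0}=\ket{+}$ and $\ket{e_1}=\ket{-}$.
\end{itemize}
Hence $\cS=\frac18\sum_x \tr[\rho_x(\ket{x_0}\bra{x_0}+\ket{e_{x_1}}\bra{e_{x_1}})]$, which is the $2\to1$ quantum random access code expression. Choosing $\rho_x$ as the pure states with Bloch vectors $((-1)^{x_1},0,(-1)^{x_0})/\sqrt2$ gives the value $\frac{1}{2}(1+\frac{1}{\sqrt2})$.

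\emph{Upper bound.} I would reduce an arbitrary strategy to this same random access code form.
\begin{itemize}
\item With $M_y(b,c)=(\Phi^y_b)^\dagger(N(c))$ as in the proof of Proposition \ref{prop3_1}, define binary POVMs $A_a=\sum_c M_0(a,c)$ and $B_a=\sum_c M_1(a\ominus c,c)$. Both are valid qubit POVMs, being coarse-grainings of $\cI_0^\dagger[N]$ and $\cI_1^\dagger[N]$.
\item Then $\cS=\frac18\sum_{x}\tr[\rho_x(A_{x_0}+B_{x_1})]\le\frac18\sum_{x_0,x_1}\|A_{x_0}+B_{x_1}\|$.
\item It therefore suffices to show $\sum_{a,b}\|A_a+B_b\|\le 4+2\sqrt2$ for arbitrary binary qubit POVMs $A$ and $B$.
\item The left-hand side is convex in each POVM, since it is a sum of operator norms of affine functions. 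So the maximum is attained at extremal binary qubit POVMs, which are projective (rank-one projectors or trivial $\{\Id,0\}$).
\item For rank-one projectors $P,Q$ with Bloch vectors $\vec p,\vec q$, $\|P+Q\|=1+|\langle\psi|\phi\rangle|=1+\sqrt{(1+\vec p\cdot\vec q)/2}$. Summing over the four sign combinations gives $4+2(|\cos\frac\theta2|+|\sin\frac\theta2|)\le 4+2\sqrt2$.
\item The trivial cases give at most $2\cdot 2=4$ plus smaller terms, so they are dominated.
\end{itemize}
This yields $\cS\le\frac{1}{2}(1+\frac1{\sqrt2})$.

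The main obstacle is the upper bound over non-projective POVMs. The convexity and extremality argument needs care, because the extreme binary qubit POVMs must be shown to be exactly the projective ones together with the trivial ones. If that step proves awkward, I would instead parametrize $A_0=(\alpha_0\Id+\vec a\cdot\vec\sigma)/2$ and $B_0=(\beta_0\Id+\vec b\cdot\vec\sigma)/2$ and maximize the resulting eigenvalue expressions directly.
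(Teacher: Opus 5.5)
Your proposal is correct. The achievability part coincides with the paper's: same instruments, Charlie measuring $\{\ket{+}\bra{+},\ket{-}\bra{-}\}$, and the same four states, whose Bloch vectors are exactly $((-1)^{x_1},0,(-1)^{x_0})/\sqrt{2}$. Your remark that $b_1\oplus c$ reproduces the $\sigma_x$ outcome is a cleaner way to see the paper's term-by-term computation.

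The reduction in the upper bound is also the paper's: your $A_0,A_1,B_0,B_1$ are its $X_0,\Id-X_0,X_1,\Id-X_1$. You differ only in how you maximize $\sum_{a,b}\|A_a+B_b\|$.

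The paper writes $X_i=\alpha_i\Id+\vec c_i\cdot\vec\sigma$ with $|\vec c_i|\le\min\{\alpha_i,1-\alpha_i\}\le\frac12$ and uses $\|\alpha\Id+\vec c\cdot\vec\sigma\|=\alpha+|\vec c|$. The $\alpha_i$ then cancel, so the sum equals $4+2(|\vec c_0+\vec c_1|+|\vec c_0-\vec c_1|)$. It finishes with $|\vec c_0+\vec c_1|+|\vec c_0-\vec c_1|\le 2\sqrt{|\vec c_0|^2+|\vec c_1|^2}\le\sqrt{2}$. This is exactly your fallback.

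Your convexity route instead restricts to extreme binary POVMs and then computes with overlaps, $|\cos\frac{\theta}{2}|+|\sin\frac{\theta}{2}|$. The extremality step is not a real obstacle: the extreme points of $\{E:0\le E\le\Id\}$ are exactly the projections, a standard fact. In the qubit case you can see it directly from the Bloch double cone, whose extreme points are the apexes $0,\Id$ and the rank-one projectors.

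What each approach buys:
\begin{itemize}
\item Your route gives structural insight. It shows the optimal effective measurements are sharp, and the optimum forces the two bases to be mutually unbiased ($\theta=\pi/2$). The cost is an extremality lemma and a case split.
\item The paper's parametrization treats all POVMs uniformly in a few lines, with no cases.
\end{itemize}

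In your trivial case, state the bound explicitly. If $A=\{\Id,0\}$, the sum is $2+2(\|B_0\|+\|B_1\|)\le 6<4+2\sqrt{2}$, and $6$ is attained when $B$ is rank-one projective. So ``at most $4$ plus smaller terms'' should read ``at most $6$''. This makes the dominance claim airtight.
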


\begin{proof}
    Let $\cI_0=\{\Phi^0_{b_0}\}$ and $\cI_1=\{\Phi^1_{b_1}\}$ be the  binary qubit instruments of Bob, and $M=\{M(c)\}$ is the  binary measurement of Charlie. 
    Given the strategy, from Eq. \eqref{Eq:22xor}, the  quantum figure of merit is 

    \begin{align}
        &\cS^\cQ(2,2)\nonumber\\
        =&\frac{1}{8}\Bigg[\tr[\rho_{00}((\Phi^0_0)^{\dagger}(M(0))+(\Phi^0_0)^{\dagger}(M(1))+(\Phi^1_0)^{\dagger}(M(0))\nonumber\\
        &+(\Phi^1_1)^{\dagger}(M(1)))]+\tr[\rho_{01}((\Phi^0_0)^{\dagger}(M(0))+(\Phi^0_0)^{\dagger}(M(1))\nonumber\\
        &+(\Phi^1_1)^{\dagger}(M(0))+(\Phi^1_0)^{\dagger}(M(1)))]+\tr[\rho_{10}((\Phi^0_1)^{\dagger}(M(0))\nonumber\\
        &+(\Phi^0_1)^{\dagger}(M(1))+(\Phi^1_0)^{\dagger}(M(0))+(\Phi^1_1)^{\dagger}(M(1)))]\nonumber\\
        &+\tr[\rho_{11}((\Phi^0_1)^{\dagger}(M(0))+(\Phi^0_1)^{\dagger}(M(1))+(\Phi^1_1)^{\dagger}(M(0))\nonumber\\
        &+(\Phi^1_0)^{\dagger}(M(1)))]\Bigg],\label{Eq:d=2_succ_prob}\\
        \leq&\frac{1}{8}[\mid\mid(\Phi^0_0)^{\dagger}(M(0))+(\Phi^0_0)^{\dagger}(M(1))+(\Phi^1_0)^{\dagger}(M(0))\nonumber\\
        &+(\Phi^1_1)^{\dagger}(M(1))\mid\mid
        +\mid\mid(\Phi^0_0)^{\dagger}(M(0))+(\Phi^0_0)^{\dagger}(M(1))\nonumber\\
        &+(\Phi^1_1)^{\dagger}(M(0))+(\Phi^1_0)^{\dagger}(M(1))\mid\mid
        +\mid\mid(\Phi^0_1)^{\dagger}(M(0))\nonumber\\
        &+(\Phi^0_1)^{\dagger}(M(1))+(\Phi^1_0)^{\dagger}(M(0))+(\Phi^1_1)^{\dagger}(M(1))\mid\mid\nonumber\\
        &+\mid\mid(\Phi^0_1)^{\dagger}(M(0))+(\Phi^0_1)^{\dagger}(M(1))+(\Phi^1_1)^{\dagger}(M(0))\nonumber\\
        &+(\Phi^1_0)^{\dagger}(M(1))\mid\mid]\label{Eq:prop3_best}.
        \end{align}
 We can simplify the above quantity and consequently, we have       
        \begin{align}\label{X}
        &\cS^\cQ(2,2)\nonumber\\
        \leq&\frac{1}{8}[\mid\mid X_0+X_1\mid\mid+\mid\mid X_0+\Id-X_1\mid\mid+\mid\mid \Id-X_0+X_1\mid\mid\nonumber\\
        &+\mid\mid \Id-X_0+\Id-X_1\mid\mid],
    \end{align}

    where $X_0:=(\Phi^0_0)^{\dagger}(M(0))+(\Phi^0_0)^{\dagger}(M(1))$ and $X_1:=(\Phi^1_0)^{\dagger}(M(0))+(\Phi^1_1)^{\dagger}(M(1))$ and $\Id$ is $2\times 2$ identity matrix. Clearly, $0\leq X_0,X_1\leq \Id$. Let, $X_i=\alpha_i\Id+\vec{c}_i.\vec{\sigma}$ where $0\leq\alpha_i\leq 1$ and $\mid\vec{c}_i\mid\leq\min\{\alpha_i,1-\alpha_i\}\leq\frac{1}{2}$. We calculate highest eigenvalues of each quantity on the right hand side of Eq. \eqref{X}.

    \begin{align}
        &\cS^\cQ(2,2)\nonumber\\
        \leq&\frac{1}{8}[(\alpha_0+\alpha_1+\mid\vec{c}_0+\vec{c}_1\mid)+(\alpha_0+1-\alpha_1+\mid\vec{c}_0-\vec{c}_1\mid)\nonumber\\
        &+(1-\alpha_0+\alpha_1+\mid\vec{c}_0-\vec{c}_1\mid)+(1-\alpha_0+1-\alpha_1+\mid\vec{c}_0+\vec{c}_1\mid)]\nonumber\\
        =&\frac{1}{4}(2+\mid\vec{c}_0+\vec{c}_1\mid+\mid\vec{c}_0-\vec{c}_1\mid)\nonumber\\
        \leq&\frac{1}{4}(2+2\sqrt{\mid c_0\mid^2+\mid c_1\mid^2})\nonumber\\
        \leq&\frac{1}{2}(1+\frac{1}{\sqrt{2}}).
    \end{align}
    Now, we show that it is achieved by the intrinsically incompatible pair of qubit instruments $(\cJ_0,\cJ_1)$ implemented by Bob. Assume that Charlie is performing the measurement $M=\{M(0)=\ket{+}\bra{+},M(1)=\ket{-}\bra{-}\}$ and Alice is encoding her inputs $\{x=x_0x_1\}$ as $\{\rho_{x_0x_1}\}$.  Then given this quantum strategy, from Eq. \eqref{Eq:d=2_succ_prob}, the corresponding figure of merit is
    \begin{align}
        &\cS^\cQ(2,2)[\cJ_0,\cJ_1,\{\rho_{x_0x_1}\},M]\nonumber\\
        =&\frac{1}{8}[\tr[\rho_{00}(\frac{1}{2}\ket{0}\bra{0}+\frac{1}{2}\ket{0}\bra{0}+\frac{1}{2}\ket{+}\bra{+}+\frac{1}{2}\ket{+}\bra{+})]\nonumber\\
        &+\tr[\rho_{01}(\frac{1}{2}\ket{0}\bra{0}+\frac{1}{2}\ket{0}\bra{0}+\frac{1}{2}\ket{-}\bra{-}+\frac{1}{2}\ket{-}\bra{-})]\nonumber\\
        &+\tr[\rho_{10}(\frac{1}{2}\ket{1}\bra{1}+\frac{1}{2}\ket{1}\bra{1}+\frac{1}{2}\ket{+}\bra{+}+\frac{1}{2}\ket{+}\bra{+})]\nonumber\\
        &+\tr[\rho_{11}(\frac{1}{2}\ket{1}\bra{1}+\frac{1}{2}\ket{1}\bra{1}+\frac{1}{2}\ket{-}\bra{-}+\frac{1}{2}\ket{-}\bra{-})]].\label{Eq:d=2_XOR_RAc_saturation}
    \end{align}
    Now, consider four pure qubit states $\ket{\psi_{00}}=\cos\frac{\pi}{8}\ket{0}+\sin\frac{\pi}{8}\ket{1}$, $\ket{\psi_{01}}=\cos\frac{\pi}{8}\ket{0}-\sin\frac{\pi}{8}\ket{1}$, $\ket{\psi_{10}}=\sin\frac{\pi}{8}\ket{0}+\cos\frac{\pi}{8}\ket{1}$, and $\ket{\psi_{11}}=\sin\frac{\pi}{8}\ket{0}-\cos\frac{\pi}{8}\ket{1}$.
    Choosing $\rho_{00}=\ket{\psi_{00}}\bra{\psi_{00}}$, $\rho_{01}=\ket{\psi_{01}}\bra{\psi_{01}}$, $\rho_{10}=\ket{\psi_{10}}\bra{\psi_{10}}$, and $\rho_{11}=\ket{\psi_{11}}\bra{\psi_{11}}$, from Eq. \eqref{Eq:d=2_XOR_RAc_saturation}, we have $\cS^\cQ(2,2)[\cJ_0,\cJ_1,\{\rho_{x_0,x_1}\},M]=\frac{1}{2}(1+\frac{1}{\sqrt{2}})=\cS^\cQ_{op}(2,2)$. This completes the proof.
\end{proof}

This certification protocol can be extended to pairs of instruments acting on $d$-dimensional Hilbert space. For that reason, we consider $(d,d)$-sequential XOR task i.e., where $d_A=d$. 

\subsection{Generalization to $\boldsymbol{(d,d)}$-sequential XOR task}
\label{Subsec:d,d_XOR_task}
At first, we state the following theorem regarding the optimum figure of merit for a given arbitrary pair of compatible instruments implemented by Bob.

\begin{thm}
   In the $(d,d)$-sequential XOR task, if Bob implements two compatible instruments, the optimum figure of merit $\cS^{\cQ^\cC}_{op}(d,d)= \frac{1}{2}(1+\frac{1}{d})$. Furthermore, the optimum figure of merit using the classical strategy $\cS^\cC_{op}(d,d)=\cS^{\cQ^\cC}_{op}(d,d)$.
    \label{Th:class_ub_d_dim}
\end{thm}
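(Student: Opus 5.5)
The plan is to generalize the proof of Proposition \ref{prop3_1} directly. Two bounds are needed: an upper bound of $\frac{1}{2}(1+\frac{1}{d})$ for compatible instruments, and a classical strategy that attains it. Proposition \ref{cor1} then closes the sandwich $\frac{1}{2}(1+\frac{1}{d})\leq\cS^\cC_{op}(d,d)\leq\cS^{\cQ^\cC}_{op}(d,d)\leq\frac{1}{2}(1+\frac{1}{d})$.

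\emph{Upper bound.} Write $M_y:=\cI_y^{\dagger}[M]$, with elements $M_y(b,c)=(\Phi^y_b)^{\dagger}(M(c))$ on $\cH_A$, $\dim\cH_A=d$. Since $\cI_0\comp\cI_1$, Proposition \ref{Prop:ins_comp_channel_comp} gives a joint POVM $\{G(b_0,c,b_1,c')\}$ with marginals $M_0$ and $M_1$. For $y=0$ the winning condition is $b_0=x_0$ for every $c$, so the relevant operator is $A_{x_0}:=\sum_c M_0(x_0,c)$. For $y=1$ it is $b_1=x_1\ominus_d c$, so the relevant operator is $B_{x_1}:=\sum_c M_1(x_1\ominus_d c,c)$. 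Bounding $\tr[\rho_x\cdot]$ by the operator norm gives
\[
\cS\leq\frac{1}{2d^2}\sum_{x_0,x_1}\|A_{x_0}+B_{x_1}\|.
\]
Now substitute the joint measurement, $A_{x_0}+B_{x_1}=\sum_{(b_0,c,b_1,c')}[\delta_{b_0,x_0}+\delta_{b_1,x_1\ominus_d c'}]\,G(b_0,c,b_1,c')$. Split this into the part where both deltas equal $1$ and the remainder. On the first part the coefficient is $2$; on the remainder it is at most $1$, so the remainder is $\leq\Id$. Hence $\|A_{x_0}+B_{x_1}\|\leq\|P_{x_0x_1}\|+1$, where
\[
P_{x_0x_1}:=\sum_{c,c'}G(x_0,c,x_1\ominus_d c',c').
\]
Bound $\|P_{x_0x_1}\|\leq\tr P_{x_0x_1}$, as in Proposition \ref{prop3_1}. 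For fixed $(b_0,c,b_1,c')$ there is exactly one $(x_0,x_1)$ with $x_0=b_0$ and $x_1=b_1\oplus_d c'$, so the sets indexing the $P_{x_0x_1}$ partition the outcome set. Therefore $\sum_{x_0,x_1}\tr P_{x_0x_1}=\tr\Id=d$, and
\[
\cS\leq\frac{1}{2d^2}\bigl(d+d^2\bigr)=\frac{1}{2}\Bigl(1+\frac{1}{d}\Bigr).
\]

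\emph{Achievability.} The classical strategy copies Proposition \ref{prop3_1}: $m_A=x_0$, Bob outputs $b_y=m_A$ for both $y$, and Charlie outputs $c=0$. For $y=0$ it always succeeds. For $y=1$ it succeeds iff $x_0=x_1$, which happens with probability $1/d$. The average is $\frac{1}{2}(1+\frac{1}{d})$.

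The main thing to check is the partition/counting step, since the winning condition depends on $c'$ (Charlie's outcome in the $y=1$ copy) and not on $c$. I must make sure each joint outcome lies in exactly one $P_{x_0x_1}$, so that the trace term sums to $d$ rather than $d^2$. The operator-norm bound $\|P\|\leq\tr P$ is loose, but it suffices here because $\dim\cH_A=d$ matches the count exactly.
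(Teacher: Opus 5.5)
Your proposal is correct and follows the same route as the paper's proof in Appendix~\ref{appA}: operator-norm bound, joint POVM from Proposition~\ref{Prop:ins_comp_channel_comp}, peeling off the doubly counted part and bounding it by its trace, summing the traces to $\tr\Id=d$, and closing with a classical strategy and Proposition~\ref{cor1}. Your doubly counted operator $P_{x_0x_1}=\sum_{c,c'}G(x_0,c,x_1\ominus_d c',c')$ is in fact the right one, whereas the appendix peels off only the diagonal terms $M(x_0,c,x_1\ominus_d c,c)$, which leaves the terms with $b_0=x_0$ and different Charlie outcomes in the two slots counted twice in the part it bounds by $\|\Id\|$.

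The one step to tighten is the justification of $\|A_{x_0}+B_{x_1}\|\leq\|P_{x_0x_1}\|+1$: write $A_{x_0}+B_{x_1}=P_{x_0x_1}+(P_{x_0x_1}+R)$ and use $P_{x_0x_1}+R\leq\Id$, which holds because the index sets of $P_{x_0x_1}$ and $R$ are disjoint and all coefficients are at most $1$. Knowing only $R\leq\Id$ would give just $2\|P_{x_0x_1}\|+1$.
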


Now, consider two quantum instruments $\hat{\cJ}^0:=\{\hat{\Psi}^{0}_{b_0}:=\ket{b_0}\bra{b_0}\rho\ket{b_0}\bra{b_0}\}\in\mathscr{I}(\cH,\cH)$ and $\hat{\cJ}^1:=\{\hat{\Psi}^{1}_{b_1}:=\frac{1}{d}Z_d^{b_1}\rho (Z_d^{b_1})^{\dagger}\}\in\mathscr{I}(\cH,\cH)$ where $\dim(\cH)=d$, $\{\ket{b_0}\}^{d-1}_{b_0=0}$ is an orthonormal basis of $\cH$, and $Z_d=\sum^{d-1}_{m=0}(\omega_d)^{m}\ket{m}\bra{m}$ where $\omega_d$ is $d$th root of $1$. These instruments are implemented by Bob. Note that the measurement $\hat{M}_0=\{\ket{b_0}\bra{b_0}\}$ and $\hat{M}_1=\{\frac{1}{d}\Id_{\cH}\}$ induced by $\cI_0$ and $\cI_1$ respectively are compatible, i.e., $\hat{M}_0\comp \hat{M}_1$ as $\hat{M}_1$ is trivial. The channels $\hat{\Psi}_0=\sum_{b_0}\hat{\Psi}^0_{b_0}$ and $\hat{\Psi}_1=\sum_{b_1}\hat{\Psi}^1_{b_1}$ are compatible i.e., $\hat{\Psi}_1\comp\hat{\Psi}_2$ as $\hat{\Psi}_1=\hat{\Psi}_2$ is a measure-and-prepare channel. Now, similar as Lemma \ref{Lem:in_incomp_in}, it can be easily shown  using Proposition \ref{Prop:ins_comp_channel_comp} that the pair $(\hat{\cJ}_0,\hat{\cJ}_1)$ is incompatible. Hence, the pair $(\hat{\cJ}_0,\hat{\cJ}_1)$ is \emph{intrinsically incompatible}.

We prove in Appendix \ref{appA}, that the figure of merit $\frac{1}{2}(1+\frac{1}{\sqrt{d}})$ can be achieved by this pair of instruments. Hence, the incompatibility of $\hat{\cJ}_0$, and $\hat{\cJ}_1$ is certified.

\subsection{Certification in minimal scenario}
\label{Subsec:minimal_scenario}
In the previous section, we have presented a specific sequential communication tasks that certifies the incompatibility of any pair of $d$-dimensional quantum instruments. The simplest case was the $(2,2)$-sequential XOR task. We now investigate whether the incompatibility of instruments can be certified in an even simpler scenario where the number of Alice's inputs, i.e., $n_A<4$ and other parameters $d_A=n_B=d_y=d_B=d_c=2$ as it is not possible to consider any of these parameters less than $2$ for a non-trivial case. The choice $n_A=2$ leads to a trivial scenario, since Alice can directly encode her input in the communicated system, allowing Bob to recover it perfectly and yielding classical success probability to be $1$ and therefore, there is no quantum advantage. Therefore, the only non-trivial possibility is $n_A=3$.

Motivated by this observation, we introduce a sequential communication task with $(n_A=3,d_A=n_B=d_y=d_B=d_c=2)$, where the statistics can be written in the same manner $\{p(b_y,c|x,y)\}$ and the figure of merit is given by,
\bea\label{min}
\Bar{\cS}&=& \Big[p(0,0|0,0)+p(0,1|0,0)+p(0,0|0,1)+p(1,1|0,1)\nonumber\\
&&+p(1,0|1,0)+p(1,1|1,0)+p(0,0|1,1)+p(1,1|1,1)\nonumber\\
&&+p(0,1|2,1)+p(1,0|2,1)\Big].
\eea

\begin{proposition}\label{prop6}
    If Bob uses two compatible instruments in this task, the figure of merit $\Bar{\cS}^{\cQ^\cC}_{op}= 4$. Furthermore, the optimum figure of merit using the classical strategy $\Bar{\cS}^{\cC}_{op}=\Bar{\cS}^{\cQ^\cC}_{op}$.
\end{proposition}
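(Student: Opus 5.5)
The plan has two parts, mirroring Proposition~\ref{prop3_1}: an operator-norm upper bound of $4$ for compatible instruments, and an explicit classical strategy reaching $4$. The lower bound for $\Bar{\cS}^{\cQ^\cC}_{op}$ then follows from Proposition~\ref{cor1}.

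\textbf{Reduction to three operators.} As before, I set $M_y(b,c):=(\Phi^y_b)^{\dagger}(M(c))$ and collect the terms of Eq.~\eqref{min} by Alice's input $x$. Define
\[
A:=M_0(0,0)+M_0(0,1),\qquad E:=M_1(0,0)+M_1(1,1).
\]
Then $M_1(0,1)+M_1(1,0)=\Id-E$ and $M_0(1,0)+M_0(1,1)=\Id-A$. The figure of merit becomes
\[
\tr[\rho_0(A+E)]+\tr[\rho_1(\Id-A+E)]+\tr[\rho_2(\Id-E)],
\]
which, since Alice may choose the states freely, is bounded by
\[
\|A+E\|+\|\Id-A+E\|+\|\Id-E\| .
\]
Since $\cI_0\comp\cI_1$, Proposition~\ref{Prop:ins_comp_channel_comp} gives a joint measurement $M(b_0,c,b_1,c')$ of $\cI_0^\dagger[M]$ and $\cI_1^\dagger[M]$. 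I coarse-grain it into four effects $P_{ij}$, with $i=b_0$ and $j$ equal to the parity $b_1\oplus c'$ ($j=0$ meaning even). These satisfy
\[
A=P_{00}+P_{01},\qquad E=P_{00}+P_{10},\qquad \textstyle\sum_{ij}P_{ij}=\Id .
\]

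\textbf{The three norm estimates.} In terms of the $P_{ij}$,
\[
A+E=2P_{00}+P_{01}+P_{10},\qquad \Id-A+E=P_{00}+2P_{10}+P_{11},\qquad \Id-E=P_{01}+P_{11}.
\]
Exactly as in the proof of Proposition~\ref{prop3_1}, I split off one copy of the doubled effect and use $P_{00}+P_{01}+P_{10}\leq\Id$ and $P_{00}+P_{10}+P_{11}\leq \Id$. This gives
\[
\|A+E\|\leq 1+\|P_{00}\|,\qquad \|\Id-A+E\|\leq 1+\|P_{10}\| .
\]
For the third term, $\|\Id-E\|=1-\lambda_{\min}(P_{00}+P_{10})$. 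It therefore suffices to show
\[
\|P_{00}\|+\|P_{10}\|\leq 1+\lambda_{\min}(P_{00}+P_{10}).
\]

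\textbf{The main obstacle.} This last inequality is the step I expect to be hardest, because the crude bound $\|P\|\le 1$ is too weak here. The plan is to use $d_A=2$. For a positive semidefinite qubit operator, $\|P\|\leq\tr P$. Hence
\[
\|P_{00}\|+\|P_{10}\|\leq \tr(P_{00}+P_{10})=\lambda_{\max}(P_{00}+P_{10})+\lambda_{\min}(P_{00}+P_{10}),
\]
and $\lambda_{\max}(P_{00}+P_{10})\leq 1$ finishes the argument. I have sanity-checked the near-tight case $P_{00}\propto\ket{0}\bra{0}$, $P_{10}\propto\ket{+}\bra{+}$, which saturates the inequality. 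Together the estimates give $\Bar{\cS}^{\cQ^\cC}\leq 4$.

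\textbf{Achievability.} The classical strategy is as follows.
\begin{itemize}
\item Alice sends $m_A=0$ if $x=0$ and $m_A=1$ if $x\in\{1,2\}$.
\item For $y=0$, Bob outputs $b_0=m_A$.
\item For $y=1$, Bob outputs $b_1=0$.
\item In both cases Bob forwards $m_B=m_A$, and Charlie outputs $c=m_B$.
\end{itemize}
Checking the terms of Eq.~\eqref{min}, one gets contributions $1,1,1,0,1$ from $(x,y)=(0,0),(0,1),(1,0),(1,1),(2,1)$ respectively, for a total of $4$. Hence $\Bar{\cS}^{\cC}_{op}\geq 4$. Combined with Proposition~\ref{cor1} and the upper bound, this gives $\Bar{\cS}^{\cC}_{op}=\Bar{\cS}^{\cQ^\cC}_{op}=4$.
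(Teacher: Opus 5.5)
Your proposal is correct and follows essentially the paper's argument. You pass to the joint measurement of $\cI_0^\dagger[M]$ and $\cI_1^\dagger[M]$, split the doubly counted effects (your $P_{00}$ and $P_{10}$) off the first two norms, bound them by their traces, and use $d_A=2$; the paper does this last step via $\|\Id-E\|\le\tr(\Id-E)$ and $\tr\Id=2$, which is equivalent to your $\tr E=\lambda_{\max}(E)+\lambda_{\min}(E)\le 1+\lambda_{\min}(E)$. Your classical strategy differs from the paper's (the paper loses the $(x,y)=(1,0)$ term rather than $(1,1)$), but it also attains $4$, so the equality follows.
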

For the proof of this proposition, we refer the reader to Appendix \ref{proof_6}.
\begin{thm}
 The quantum advantage in this task implies that the pair of instruments implemented by Bob is incompatible.
\end{thm}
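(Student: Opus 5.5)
The theorem follows the same template as the corresponding theorem for the $(2,2)$-sequential XOR task, now using Proposition~\ref{prop6} in place of Proposition~\ref{prop3_1}. I would first fix the meaning of ``quantum advantage'' in this minimal task. It means that the quantum strategy (states $\{\rho_x\}_{x=0}^{2}$, Bob's instruments $\cI_0,\cI_1$, Charlie's measurement $N$) yields a value $\Bar{\cS}^{\cQ}$ in Eq.~\eqref{min} with $\Bar{\cS}^{\cQ}>\Bar{\cS}^{\cC}_{op}$.

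The core step is the contrapositive. Suppose that $\cI_0\comp\cI_1$. Then the resulting statistics $\{p(b_y,c|x,y)\}$ lie in $\cQ^{\cC}(3,2,2,2,2,2)$, so the figure of merit satisfies $\Bar{\cS}^{\cQ}\le\Bar{\cS}^{\cQ^\cC}_{op}$. Proposition~\ref{prop6} gives $\Bar{\cS}^{\cQ^\cC}_{op}=4=\Bar{\cS}^{\cC}_{op}$, hence $\Bar{\cS}^{\cQ}\le\Bar{\cS}^{\cC}_{op}$, and there is no quantum advantage. Therefore any strict violation of the classical bound $4$ forces $(\cI_0,\cI_1)$ to be incompatible.

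Proposition~\ref{cor1} supplies the other inclusion, $\cC\subseteq\cQ^{\cC}$. Together with the matching bound in Proposition~\ref{prop6}, this shows that the classical value coincides with the compatible-instrument value. The classical bound is therefore exactly the right threshold for the certification.

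There is no real obstacle here, since all the substantive work sits in Proposition~\ref{prop6} (proved in Appendix~\ref{proof_6}), which is assumed. The only point needing care is that $\Bar{\cS}$ is a nonnegative linear functional of the statistics, so optimizing over the subset $\cQ^{\cC}$ correctly bounds every compatible strategy. To make the theorem non-vacuous, one may also remark that some quantum strategy exceeds $4$. A natural candidate is an adaptation of $(\cJ_0,\cJ_1)$ with Charlie measuring in the $\sigma_x$ basis, but this is not needed for the implication itself.
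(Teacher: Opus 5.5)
Your proposal is correct and follows the same argument as the paper: Proposition~\ref{prop6} shows that the compatible-instrument optimum equals the classical optimum $4$, so any $\bar{\cS}^{\cQ}>4$ rules out $\cI_0\comp\cI_1$. The appeal to nonnegativity of $\bar{\cS}$ is not needed, since the optimum over $\cQ^{\cC}$ bounds every compatible strategy by definition.
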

\begin{proof}
   From the Proposition \ref{prop6}, we know that $\Bar{\cS}^\cC_{op}=\Bar{\cS}^{\cQ^\cC}_{op}$. Therefore, whenever $\Bar{\cS}^\cQ>\Bar{\cS}^\cC_{op}$, it follows that the instruments implemented by Bob must be incompatible.
\end{proof}
Now, in the following Proposition, we show that the incompatibility of the pair of instruments $(\cJ_0,\cJ_1)$ can be certified through this task.
\begin{proposition}\label{prop5}
   In this task, the optimum quantum figure of merit $\Bar{\cS}^\cQ_{op}= 3+\sqrt{2}>4$ and it is achieved by the intrinsically incompatible pair of instruments $(\cJ_0,\cJ_1)$ implemented by Bob.    
\end{proposition}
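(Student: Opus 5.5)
The plan is to prove the upper bound $\Bar{\cS}^\cQ\leq 3+\sqrt{2}$ for every quantum strategy with the Bloch-vector technique of Proposition \ref{prop3}, and then to show that $(\cJ_0,\cJ_1)$ attains it. The inequality $3+\sqrt{2}>4$ is immediate. Together with Proposition \ref{prop6}, this certifies the incompatibility of $(\cJ_0,\cJ_1)$.

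\emph{Step 1 (rewriting the merit).} Let Alice send $\rho_0,\rho_1,\rho_2$, let Bob use binary qubit instruments $\cI_y=\{\Phi^y_{b_y}\}$, and let Charlie measure $M=\{M(0),M(1)\}$. Using $p(b_y,c|x,y)=\tr[\rho_x(\Phi^y_{b_y})^{\dagger}(M(c))]$, I group the ten terms of Eq. \eqref{min} by $x$. Define
$X_0:=(\Phi^0_0)^{\dagger}(M(0))+(\Phi^0_0)^{\dagger}(M(1))=(\Phi^0_0)^{\dagger}(\Id)$ and $X_1:=(\Phi^1_0)^{\dagger}(M(0))+(\Phi^1_1)^{\dagger}(M(1))$. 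Since $\sum_{b_y,c}(\Phi^y_{b_y})^{\dagger}(M(c))=\Id$, the complementary sums equal $\Id-X_0$ and $\Id-X_1$. This gives
\begin{align}
\Bar{\cS}=\tr[\rho_0(X_0+X_1)]+\tr[\rho_1(\Id-X_0+X_1)]+\tr[\rho_2(\Id-X_1)].
\end{align}
Bounding each trace by the operator norm yields
$\Bar{\cS}\leq\|X_0+X_1\|+\|\Id-X_0+X_1\|+\|\Id-X_1\|$, where $0\leq X_i\leq\Id$.

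\emph{Step 2 (qubit bound).} As in Proposition \ref{prop3}, I write $X_i=\alpha_i\Id+\vec{c}_i\cdot\vec{\sigma}$ with $|\vec{c}_i|\leq\min\{\alpha_i,1-\alpha_i\}\leq\frac12$. Computing the largest eigenvalues gives
\begin{align}
\Bar{\cS}\leq 2+(\alpha_1+|\vec{c}_1|)+|\vec{c}_0+\vec{c}_1|+|\vec{c}_0-\vec{c}_1|.
\end{align}
The condition $|\vec{c}_1|\leq 1-\alpha_1$ gives $\alpha_1+|\vec{c}_1|\leq 1$. By the parallelogram law and Cauchy--Schwarz, $|\vec{c}_0+\vec{c}_1|+|\vec{c}_0-\vec{c}_1|\leq 2\sqrt{|\vec{c}_0|^2+|\vec{c}_1|^2}\leq\sqrt{2}$. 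Hence $\Bar{\cS}^\cQ_{op}\leq 3+\sqrt{2}$.

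The step I expect to need the most care is this one. The three terms enter asymmetrically, unlike in Proposition \ref{prop3}, so the $\alpha_i$ no longer cancel. The key is to pair $\alpha_1$ with $|\vec{c}_1|$ coming from the $\|\Id-X_1\|$ term and to use the constraint $|\vec{c}_1|\leq 1-\alpha_1$, while $\alpha_0$ cancels between the first two terms.

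\emph{Step 3 (achievability).} Let Bob use $(\cJ_0,\cJ_1)$ and let Charlie measure $\{\ket{+}\bra{+},\ket{-}\bra{-}\}$. As in Eq. \eqref{Eq:d=2_XOR_RAc_saturation}, this gives $X_0=\ket{0}\bra{0}$ and $X_1=\ket{+}\bra{+}$. The three operators are then $\ket{0}\bra{0}+\ket{+}\bra{+}$, $\ket{1}\bra{1}+\ket{+}\bra{+}$ and $\ket{-}\bra{-}$, with largest eigenvalues $1+\frac{1}{\sqrt{2}}$, $1+\frac{1}{\sqrt{2}}$ and $1$. Alice chooses $\rho_x$ to be the corresponding top eigenvectors: $\rho_0$ is the state at angle $\pi/8$ between $\ket{0}$ and $\ket{+}$, $\rho_1$ is the analogous state between $\ket{1}$ and $\ket{+}$, and $\rho_2=\ket{-}\bra{-}$. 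This saturates the bound, giving $\Bar{\cS}=3+\sqrt{2}>4=\Bar{\cS}^{\cQ^\cC}_{op}$. By Proposition \ref{prop6}, the pair $(\cJ_0,\cJ_1)$ is therefore certified incompatible.
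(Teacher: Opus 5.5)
Your proposal is correct and follows essentially the same route as the paper's proof in Appendix~\ref{proof_5}. You reduce the merit to $\|X_0+X_1\|+\|\Id-X_0+X_1\|+\|\Id-X_1\|$, bound it via the Bloch representation using $\alpha_1+|\vec{c}_1|\leq 1$ and $|\vec{c}_0+\vec{c}_1|+|\vec{c}_0-\vec{c}_1|\leq 2\sqrt{|\vec{c}_0|^2+|\vec{c}_1|^2}\leq\sqrt{2}$, and saturate it with $(\cJ_0,\cJ_1)$, Charlie measuring in the $\{\ket{+},\ket{-}\}$ basis, and the states $\cos\frac{\pi}{8}\ket{0}+\sin\frac{\pi}{8}\ket{1}$, $\sin\frac{\pi}{8}\ket{0}+\cos\frac{\pi}{8}\ket{1}$ and $\ket{-}$, exactly as the paper does.
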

The proof of this proposition is postponed to Appendix \ref{proof_5}.

\section{Conclusion}
\label{sec:conc}
In this work, we have demonstrated the certification of the incompatibility of quantum instruments using sequential communication tasks. The \emph{importance} of this work lies in the following aspects. Firstly, before this work, the certification of (parallel) incompatibility of instruments has not been demonstrated to the best of our knowledge, Secondly, our certification protocol enables certification of instrument incompatibility, where the set of induced measurements and the set of induced channels are  compatible (even in minimal scenario). In other words, it \emph{does not depend} on the incompatibility of the measurements and channels induced by the instruments and therefore, it is \emph{genuine}. Thirdly, our approach is \emph{unified}; i.e., the scenarios of measurement incompatibility certification and the scenarios channel incompatibility certification become special cases of our certification scenario (see Fig. \ref{fig:communication_scenario}). Lastly, this protocol can also be viewed as an example of sequential communication task where instrument incompatibility is necessary to get quantum advantage.

Our work opens up several research directions. It is interesting to explore self-testing of quantum incompatible instruments through our sequential communication scenario. Furthermore, while the proposed certification protocol is restricted to two instruments, it can be generalized to scenarios involving more than two instruments. In such cases, different layers of incompatibility structures may arise within a set of instruments. It would be interesting to certify these structures through the construction of suitable sequential communication tasks.
Additionally, the application of these kinds of certification protocols in cryptography and secure communication task is an interesting aspect to investigate. 

\section{Acknowledgments}
This project is supported by the funding from STARS (Grant No. STARS/STARS-2/2023-0809), Government of India.

\bibliography{references}

\appendix

\section{Proof of Theorem \ref{Th:class_ub_d_dim} and the certification of intrinsically incompatible instruments in $(d,d)$-sequential XOR task}\label{appA}

\subsection{Proof of Theorem \ref{Th:class_ub_d_dim}}
Consider the  compatible instruments implemented by Bob to be $\cI_y=\{\Phi^y_{b_y}\}$, where $y=0,1$. The  measurement for Charlie is $M := \{M(c)\}$.
After action of instruments on measurement $M$,  it provides the measurements $M_0:=\cI^{\dagger}_0[M]=\{M_0(b_0,c)=(\Phi^0_{b_0})^{\dagger}(M(c))\}$ and $M_1:=\cI^{\dagger}_1[M]=\{M_1(b_1,c)=(\Phi^1_{b_1})^{\dagger}(M(c))\}$. As $\cI_1\comp\cI_2$, we have $M_1\comp M_2$. Hence, there exists a joint measurement $M=\{M(b_0,c,b_1,c^{\prime})\}$ such that $M_0(b_0,c)=\sum_{b_1,c^{\prime}}M(b_0,c,b_1,c^{\prime})$ and $M_1(b_1,c^{\prime})=\sum_{b_0,c}M(b_0,c,b_1,c^{\prime})$.
 If Alice uses the states $\{\rho_{x_0 x_1}\}$ as encoding then the  figure of merit can be written as
 \begin{align}
      &\cS^{\cQ^\cC}(d,d)\nonumber\\
    =& \frac{1}{2d^2}\sum_{x}\big[\sum_c(\tr[\rho_{x_0x_1}(\Phi^0_{x_0})^{\dagger}(M(c))+(\Phi^{1}_{x_1\ominus_d c})^{\dagger}(M(c)))]\big]\nonumber\\
    =&\frac{1}{2d^2}\sum_{x}\big[\tr[\rho_{x_0x_1}\sum_c(M_0(x_0,c)+M_1(x_1\ominus_d c,c))]\big]\nonumber\\
    \leq&\frac{1}{2d^2}\sum_{x}\big[||\sum_c(M_0(x_0,c)+M_1(x_1\ominus_d c,c))||\big]\nonumber\\
    =&\frac{1}{2d^2}\sum_{x}\big[||\sum_c(\sum_{b_1,\tilde{c}^{\prime}}M(x_0,c,b_1,\tilde{c}^{\prime})+\sum_{b_0,\tilde{c}}M(b_0,\tilde{c},x_1\ominus_d c,c))||\big]\nonumber\\
    =&\frac{1}{2d^2}\sum_{x}\big[||\sum_c(\sum_{b_1,\tilde{c}^{\prime}}M(x_0,c,b_1,\tilde{c}^{\prime})+\sum_{b_0,\tilde{c}}M(b_0,\tilde{c},x_1\ominus_d c,c))||\big]\nonumber\\
    =&\frac{1}{2d^2}\sum_{x}\big[||\sum_c(M(x_0,c,x_1\ominus_d c,c)\nonumber\\
&+\sum_{b_1,\tilde{c}^{\prime}}M(x_0,c,b_1,\tilde{c}^{\prime})+\sum_{\substack{b_0,\tilde{c}\\
    b_0\neq x_0,\tilde{c}\neq c}}M(b_0,\tilde{c},x_1\ominus_d c,c)||\big]\nonumber\\
    \leq&\frac{1}{2d^2}\sum_{x}\big[||\sum_cM(x_0,c,x_1\ominus_d c,c)||\nonumber\\
&+||\sum_c(\sum_{b_1,\tilde{c}^{\prime}}M(x_0,c,b_1,\tilde{c}^{\prime})+\sum_{\substack{b_0,\tilde{c}\\
    b_0\neq x_0,\tilde{c}\neq c}}M(b_0,\tilde{c},x_1\ominus_d c,c))||\big]\nonumber\\
    \leq&\frac{1}{2d^2}\sum_{x_0,x_1}\big[\tr[\sum_cM(x_0,c,x_1\ominus_d c,c)]\nonumber\\      &+||\sum_{b_0,c,b_1,\tilde{c}^{\prime}}M(b_0,c,b_1,\tilde{c}^{\prime})||\big]\nonumber\\
    \leq&\frac{1}{2d^2}\big[\tr[\sum_{x_0,x_1,c}M(x_0,c,x_1\ominus_d c,c)]+\sum_{x_0,x_1}||\Id_{\cH}||\big]\nonumber\\ 
    \leq&\frac{1}{2d^2}\big[\tr[\Id_{\cH}]+\sum_{x_0,x_1}||\Id_{\cH}||\big]\nonumber\\
    =&\frac{1}{2d^2}\big[d+d^2\big]
    =\frac{1}{2}\big[1+\frac{1}{d}\big].
 \end{align}

 From Proposition \ref{cor1}, we know that if a probability distribution $\{p(b_y,c|x,y)\}$ is achieved by a classical strategy, it must be achieved by a quantum strategy where the instruments implemented by Bob are compatible and therefore, $\cS^\cC_{op}(d,d)\leq\cS^{\cQ^\cC}_{op}(d,d)$. Therefore, to prove $\cS^\cC_{op}(d,d)=\cS^{\cQ^\cC}_{op}(d,d)=\frac{1}{2}(1+\frac{1}{d})$, it is enough to show that there exists a classical strategy that achieves the figure of merit $\frac{1}{2}(1+\frac{1}{d})$.
 Now, consider a classical strategy where, given input $x$, Alice always sends $x_0$ to Bob, Charlie always announces his output to be zero, and for input $y=0$ Bob announces the output $x_0$ and for input $y=1$ he randomly announces the output from the set $\{0,\ldots,d-1\}$ with uniform probability $\frac{1}{d}$. In other words, $p_e(m_A=x_0|x)=1\forall x$, and for all $r\in\{0,\ldots,d-1\}$ $p_d(c=0|m_B)=1~\forall m_B$, $p_t(b_0=x_0,0|x,0)=1~\forall x$, and $p_t(b_1=r,0|x,1)=\frac{1}{d}~\forall x$. Clearly, since each input of Bob occurs with probability $\frac{1}{2}$, the figure of merit is $\frac{1}{2}(1+\frac{1}{d})$.

\subsection{The certification of instrinsically incompatible instruments in $(d,d)$-sequential XOR task}

Recall that we are given two quantum instruments $\hat{\cJ}^0:=\{\hat{\Psi}^{0}_{b_0}:=\ket{b_0}\bra{b_0}\rho\ket{b_0}\bra{b_0}\}\in\mathscr{I}(\cH,\cH)$ and $\hat{\cJ}^1:=\{\hat{\Psi}^{1}_{b_1}:=\frac{1}{d}Z_d^{b_1}\rho (Z_d^{b_1})^{\dagger}\}\in\mathscr{I}(\cH,\cH)$ where $\dim(\cH)=d$, $\{\ket{b_0}\}^{d-1}_{b_0=0}$ is an orthonormal basis of $\cH$, and $Z_d=\sum^{d-1}_{m=0}(\omega_d)^{m}\ket{m}\bra{m}$ where $\omega_d$ is $d$th root of $1$. Now, consider a projective measurement $M=\{M(c)=\ket{f^d_c}\bra{f^d_c}\}$ where $\{\ket{f^d_c}=\frac{1}{\sqrt{d}}\sum^{d-1}_{m=0}\omega^{mc}_d\ket{m}\}^{d-1}_{c=0}$ is an orthonormal basis of $\cH$. This measurement is implemented by Charlie. Clearly, the pair of bases $\{\ket{b_0}\}^{d-1}_{b_0=0}$ and $\{\ket{f^d_c}\}^{d-1}_{c=0}$ are mutually unbiased basis (MuB). It can be easily seen that $Z^{b_1}\ket{f^d_c}=\ket{f^d_{b_1\oplus_d c}}$ where $\oplus_d$ denotes addition modulo $d$. Therefore, $(\hat{\Psi}^0_{b_0})^{\dagger}(M(c))=\frac{1}{d}\ket{b_0}\bra{b_0}$ and $(\hat{\Psi}^1_{b_1})^{\dagger}(M(c))=\frac{1}{d}\ket{f^d_{b_1\oplus_d c}}\bra{f^d_{b_1\oplus_d c}}$ for all $b_0,b_1,c$.

Now, consider the figure of merit of this task described in Eq. \eqref{Eq:22xor}. Given Bob and Charlie are performing the above-said instruments and measurement, respectively, and Alice is encoding the string $x$ to $\{\rho_{x_0x_1}\}$, where $\rho_{x_0x_1}=\ket{\psi_{x_0x_1}}\bra{\psi_{x_0x_1}}$ where $\ket{\psi_{x_0x_1}}=\frac{\braket{f^d_{x_1}\mid x_0}}{|\braket{f^d_{x_1}\mid x_0}|}\ket{f^d_{x_1}}+\ket{x_0}$ for all $x_0,x_1$. The figure of merit in this strategy takes the form as follows. 

\begin{align}
    &\cS^\cQ(d,d)\{\hat{\cJ}^0, \hat{\cJ}^1,\{\rho_{x_0x_1}\}\}\nonumber\\
    =& \frac{1}{2d^2}\sum_{x}\big[\sum_c\tr[\rho_{x_0x_1}(\hat{\Psi}^0_{x_0})^{\dagger}(M(c))]\nonumber\\
    &\hspace{3.4cm}+\sum_c\tr[\rho_{x_0x_1}(\hat{\Psi}^{1}_{x_1\ominus_d c})^{\dagger}(M(c))]\big]\nonumber\\
    =& \frac{1}{2d^2}\sum_{x}\big[\sum_c\tr[\rho_{x_0x_1}\frac{1}{d}\ket{x_0}\bra{x_0}]+\sum_c\tr[\rho_{x_0x_1}\frac{1}{d}\ket{f^d_{x_1}}\bra{f^d_{x_1}}]\big]\nonumber\\
    =& \frac{1}{2d^2}\sum_{x}\big[\tr[\rho_{x_0x_1}\ket{x_0}\bra{x_0}+\ket{f^d_{x_1}}\bra{f^d_{x_1}}]\big].
    \end{align}
    Putting $\rho_{x_0x_1}=\ket{\psi_{x_0x_1}}\bra{\psi_{x_0x_1}}$ (defined above) into the above equation, we get
    \begin{align}
    \cS^\cQ(d,d)[\hat{\cJ}^0, \hat{\cJ}^1,\{\rho_{x_0x_1}\},M]=& \frac{1}{2d^2}\sum_{x_0,x_1}\big[1+|\braket{f^d_{x_1}\mid x_0}|\big]\nonumber\\
    =& \frac{1}{2d^2}\sum_{x_0,x_1}\big[1+\frac{1}{\sqrt{d}}\big]\nonumber\\
    =& \frac{1}{2}\big[1+\frac{1}{\sqrt{d}}\big].
\end{align}
 As $\cS^\cQ(d,d)[\hat{\cJ}^0, \hat{\cJ}^1,\{\rho_{x_0x_1}\},M]>\cS^{\cQ^\cC}_{op}(d,d)=\cS^\cC_{op}(d,d)$, the quantum advantage is non zero which implies the incompatibility of any pair $\hat{\cJ}_0$ and $\hat{\cJ}_1$ used by Bob. Hence, the incompatibility of $\hat{\cJ}_0$ and $\hat{\cJ}_1$ is certified.
\section{Certification in minimal scenario}
\subsection{Proof of Proposition \ref{prop6}}\label{proof_6}
Consider the  compatible instruments implemented by Bob to be $\cI_y=\{\Phi^y_{b_y}\}$, where $y=0,1$. The  measurement for Charlie is $M := M(c)$.
      The instruments $\{\cI_y\}_y$ acts on the measurement $M$ and provide the measurements $M_0:=\cI^{\dagger}_0[M]=\{M_0(b_0,c)=(\Phi^0_{b_0})^{\dagger}(M(c))\}$ and $M_1:=\cI^{\dagger}_1[M]=\{M_1(b_1,c)=(\Phi^1_{b_1})^{\dagger}(M(c))\}$. Since $\cI_0\comp\cI_1$, it follows that $M_0\comp M_1$. Therefore, there exists a joint measurement $M=\{M(b_0,c,b_1,c^{\prime})\}$ such that
$M_0(b_0,c)=\sum_{b_1,c^{\prime}}M(b_0,c,b_1,c^{\prime})$
and
$
M_1(b_1,c^{\prime})=\sum_{b_0,c}M(b_0,c,b_1,c^{\prime}).
$. Now, if Alice encodes her inputs in the states $\{\rho_{x_0x_1}\}$ then we can write the  merit in terms of $M_0(b_0,c)$ and $M_1(b_1,c)$ as 
\bea
&&\Bar{\cS}^{\cQ^\cC}\nonumber\\
&=&\tr\Big[\rho_0\Big((\Phi^0_0)^\dagger M(0) +(\Phi^0_0)^\dagger M(1)+(\Phi^1_0)^\dagger M(0) \nonumber\\
&&+(\Phi^1_1)^\dagger M(1)\Big)
+ \rho_1\Big((\Phi^0_1)^\dagger M(0) +(\Phi^0_1)^\dagger M(1)\nonumber\\
&&+(\Phi^1_0)^\dagger M(0) +(\Phi^1_1)^\dagger M(1)\Big)\nonumber\\
&&+ \rho_2\left((\Phi^1_0)^\dagger M(1) +(\Phi^1_1)^\dagger M(0)\right)\Big]\\
&\leq& ||M_0(0,0) + M_0(0,1)+M_1(0,0) + M_1(1,1)||\nonumber\\
&&+ || M_0(1,0) + M_0(1,1)+M_1(0,0) + M_1(1,1)||\nonumber\\
&&+ || M_1(0,1) + M_1(1,0)||\nonumber\\
&=&||\sum_{b_0,c}(M(b_0,c,0,0) + M(b_0,c,1,1))\nonumber\\
&&+ \sum_{b_1,c'}(M(0,0,b_1,c') + M(0,1,b_1,c'))||\nonumber\\
&&+||\sum_{b_0,c}(M(b_0,c,0,0) + M(b_0,c,1,1))\nonumber\\
&&+ \sum_{b_1,c'}(M(1,0,b_1,c') + M(1,1,b_1,c'))||\nonumber\\
&&+||\sum_{b_0,c}(M(b_0,c,0,1) + M(b_0,c,1,0))||\nonumber
\eea
\bea
&=&||M(0,0,0,0)+M(0,1,0,0)+M(0,0,1,1)\nonumber\\
&&+M(0,1,1,1)
+\sum_{b_0,c}(M(b_0,c,0,0) + M(b_0,c,1,1))\nonumber\\
&&+ \sum_{(b_1,c')\neq (0,0),(1,1)}(M(0,0,b_1,c') + M(0,1,b_1,c'))||\nonumber\\
&&+||M(1,0,0,0)+M(1,0,1,1)+M(1,1,0,0)\nonumber\\
&&+M(1,1,1,1)
+\sum_{b_0,c}(M(b_0,c,0,0) + M(b_0,c,1,1))\nonumber\\
&&+ \sum_{(b_1,c')\neq (0,0),(1,1)}(M(1,0,b_1,c') + M(1,1,b_1,c'))||\nonumber\\
&&+||\sum_{b_0,c}(M(b_0,c,0,1) + M(b_0,c,1,0))||\nonumber\\
&\leq &\tr\Big[M(0,0,0,0)+M(0,1,0,0)+M(0,0,1,1)\nonumber\\
&&+M(0,1,1,1)
+M(1,0,0,0)+M(1,0,1,1)\nonumber\\
&&+M(1,1,0,0)+M(1,1,1,1)\nonumber\\
&&+\sum_{b_0,c}(M(b_0,c,0,1) + M(b_0,c,1,0))\Big]\nonumber\\
&&+||\sum_{b_0,c}(M(b_0,c,0,0) + M(b_0,c,1,1))\nonumber\\
&&+\sum_{(b_1,c')\neq (0,0),(1,1)}(M(1,0,b_1,c') + M(1,1,b_1,c'))||\nonumber\\
&&+||\sum_{b_0,c}(M(b_0,c,0,0) + M(b_0,c,1,1))\nonumber\\
&&+ \sum_{(b_1,c')\neq (0,0),(1,1)}(M(1,0,b_1,c') + M(1,1,b_1,c'))||\nonumber\\
&\leq& \tr[\Id_{\cH}]+2||\Id_{\cH}||\nonumber\\
&=& 4.
\eea
From Proposition \ref{cor1}, we know that if a probability distribution $\{p(b_y,c|x,y)\}$ is achieved by a classical strategy, it must be achieved by a quantum strategy where the instruments implemented by Bob are compatible and therefore, $\bar{\cS}^\cC_{op}\leq\bar{\cS}^{\cQ^\cC}_{op}$. Therefore, to prove $\bar{\cS}^\cC_{op}=\bar{\cS}^{\cQ^\cC}_{op}=4$, it is enough to show that there exists a classical strategy that achieves the figure of merit $4$.
 Now, we show that there exists a classical strategy which achieves this figure of merit.
Consider a classical strategy where  $p_e(m_A=0|x=0)=p_e(m_A=0|x=1)=p_e(m_A=1|x=2)=1$, $p_t(b_0=0,m_B=0|m_A=0,y=0) = p_t(b_1=0,m_B=0|m_A=0,y=1)=p_t(b_1=1,m_B=0|m_A=1,y=1)=1$, and $p_d(c=0|m_B=0)=1$. With this strategy, from \eqref{min} and Eq.\eqref{Eq:prob_class_strat}, the figure of merit is $4$.

\subsection{Proof of Proposition \ref{prop5}}\label{proof_5}
Given an input $x$, let $\rho_x$ be the  state prepared by Alice, $\cI_0=\{\Phi^0_x\}$ and $\cI_1=\{\Phi^1_x\}$ be binary qubit instruments implemented by Bob, and $M=\{M(c)\}$ be the dichotomic measurement implemented by Charlie. The  quantum figure of merit corresponding to this strategy can be written from Eq. \eqref{min} as follows:
\bea
&&\Bar{\cS}^\cQ\nonumber\\
&=& \tr\Big[\rho_0\Big((\Phi^0_0)^\dagger M(0) +(\Phi^0_0)^\dagger M(1)+(\Phi^1_0)^\dagger M(0) \nonumber\\
&&+(\Phi^1_1)^\dagger M(1)\Big)+ \rho_1\Big((\Phi^0_1)^\dagger M(0) +(\Phi^0_1)^\dagger M(1)\nonumber\\
&&+(\Phi^1_0)^\dagger M(0) +(\Phi^1_1)^\dagger M(1)\Big)\nonumber\\
&&+ \rho_2\left((\Phi^1_0)^\dagger M(1) +(\Phi^1_1)^\dagger M(0)\right)\Big]\label{Eq:Q_suc_mer_min}\\
&\leq& ||(\Phi^0_0)^\dagger M(0) +(\Phi^0_0)^\dagger M(1)+(\Phi^1_0)^\dagger M(0) +(\Phi^1_1)^\dagger M(1)||\nonumber\\
&&+ ||(\Phi^0_1)^\dagger M(0) +(\Phi^0_1)^\dagger M(1)+(\Phi^1_0)^\dagger M(0) +(\Phi^1_1)^\dagger M(1)||\nonumber\\
&&+ ||(\Phi^1_0)^\dagger M(1) +(\Phi^1_1)^\dagger M(0)||\nonumber\\
&=& ||X_0+X_1||+||X_1+\Id-X_0||+||\Id-X_1||\label{sq1},
\eea
where $X_0=(\Phi^0_0)^\dagger M(0) +(\Phi^0_0)^\dagger M(1)$ and $ X_1=(\Phi^1_0)^\dagger M(0) +(\Phi^1_1)^\dagger M(1)$. Let, $X_i=\alpha_i\Id+\vec{c}_i.\vec{\sigma}$ where $0\leq\alpha_i\leq 1$ and $\mid\vec{c}_i\mid\leq\min\{\alpha_i,1-\alpha_i\}\leq\frac{1}{2}$. This indicates $\sum_i |c_i|^2\leq1/2$. Now, we can write the above quantity as,
\bea\label{sqq}
&&\Bar{\cS}^\cQ\nonumber\\
&\leq & (\alpha_0+\alpha_1+|\vec{c}_0+\vec{c}_1|+\alpha_1+1-\alpha_0\nonumber\\
&&+|\vec{c}_1-\vec{c}_0|+1-\alpha_1+|\vec{c}_1|)\nonumber\\
&=& (2 +\alpha_1+|\vec{c}_1|+|\vec{c}_0+\vec{c}_1|+|\vec{c}_1-\vec{c}_0|)\nonumber\\
&\leq& (2 +\alpha_1+|\vec{c_1}|+2\sqrt{|c_0|^2+|c_1|^2})\nonumber\\
&\leq& 2+\sqrt{2}+\alpha_1+|\vec{c}_1|\nonumber\\
&\leq& 3+\sqrt{2}.
\eea
Last line comes from the fact that $|\vec{c}_1|\leq\min\{\alpha_1,1-\alpha_1\},$ which implies $\alpha_1+|\vec{c}_1|\leq 1$.

 Now, we show that the upper bound is saturated for the intrinsically incompatible pair of qubit instruments $(\cJ_0,\cJ_1)$ implemented by Bob. Assume that Charlie is performing the measurement $M=\{M(0)=\ket{+}\bra{+},M(1)=\ket{-}\bra{-}\}$.  Then from \eqref{Eq:Q_suc_mer_min}, we have,
 \bea
&&\Bar{\cS}^\cQ[\cJ_0,\cJ_1,\{\rho_x\},M]\nonumber\\
&=& \tr\Bigg[\rho_0\left(\frac12\ket{+}\bra{+} +\frac12\ket{+}\bra{+} +\frac12\ket{0}\bra{0} +\frac12\ket{0}\bra{0}\right)\nonumber\\
&+& \rho_1\left(\frac12\ket{+}\bra{+} +\frac12\ket{+}\bra{+} +\frac12\ket{1}\bra{1} +\frac12\ket{1}\bra{1}\right)\nonumber\\
&+& \rho_2\left(\frac12\ket{-}\bra{-} +\frac12\ket{-}\bra{-}\right)\Bigg].
 \eea
 We chose $\rho_0=\ket{\psi_0}\bra{\psi_0},\rho_1=\ket{\psi_1}\bra{\psi_1}$ and $\rho_2=\ket{\psi_2}\bra{\psi_2}$, where $\ket{\psi_0}=\cos\frac{\pi}{8}\ket{0}+\sin\frac{\pi}{8}\ket{1},\ket{\psi_1}=\sin\frac{\pi}{8}\ket{0}+\cos\frac{\pi}{8}\ket{1}$ and $\ket{\psi_2}=\ket{-}$. With this strategy, $\Bar{\cS}^\cQ[\cJ_1,\cJ_2,\{\rho_x\},M]=3+\sqrt{2}$. This completes the proof.
 \section{Necessity of 'no input on Charlie' scenario}
 \label{App_sec:Charlie}
In this section, we prove that if Charlie is allowed inputs with unrestricted measurements, quantum advantage in a sequential communication task does not, in general, certify Bob's instrument incompatibility. Specifically, we show that, when Charlie has access to inputs, there exists a sequential communication task that exhibits a quantum advantage even when Bob employs two compatible quantum instruments. In this case, the probability distribution takes the form $\{p(b_y,c|x,y,z)\}$, where $z$ denotes the input of Charlie and $z\in[n_C]$. Now consider a communication task, where $n_A=n_B=n_C=d_c=d_A=d_B=2$ and $d_y=1$. As $d_y$ is fixed, we can omit this from the probability distribution. Consider the following figure of merit written in linear combination of terms $p(c|x,y,z)$,
 \bea\label{merit_s'}
\cS' &=& p(0|0,0,0)+p(0|0,0,1)+p(0|1,0,0)\nonumber\\
&&+p(1|1,0,1)+p(1|0,1,0).
 \eea
 Now we calculate the optimal classical figure of merit for this task.
Using the Eq. \eqref{c_stat}, from \eqref{merit_s'}, we can write,
\bea
\cS'^\cC_{op}&=&\max_{\substack{\{p_e(m_A|x)\},\\ \{p_t(m_B|y,m_A)\},\\ \{p_d(c|m_B)\} }} \sum_{m_A,m_B} \Big(p_e(m_A|0)p_t(m_B|0,m_A)p_d(0|0,m_B)\nonumber\\
&&+p_e(m_A|0)p_t(m_B|0,m_A)p_d(0|1,m_B)\nonumber\\
&&+p_e(m_A|1)p_t(m_B|0,m_A)p_d(0|0,m_B)\nonumber\\
&&+p_e(m_A|1)p_t(m_B|0,m_A)p_d(1|1,m_B)\nonumber\\
&&+p_e(m_A|0)p_t(m_B|1,m_A)p_d(1|0,m_B)\Big)\nonumber\\
&=&\max_{\substack{\{p_e(m_A|x)\},\\ \{p_t(m_B|y,m_A)\} }} \sum_{m_B}\max\Big\{\sum_{m_A}p_e(m_A|0)p_t(m_B|0,m_A)\nonumber\\
&&+\sum_{m_A}p_e(m_A|1)p_t(m_B|0,m_A), \nonumber\\
&&\sum_{m_A}p_e(m_A|0)p_t(m_B|1,m_A)\Big\}\nonumber\\
&&+\sum_{m_B}\max\Big\{\sum_{m_A}p_e(m_A|0)p_t(m_B|0,m_A), \nonumber\\
&&\sum_{m_A}p_e(m_A|1)p_t(m_B|0,m_A)\Big\}
\eea

We check all the deterministic strategy to find the maximum value of above equation. Consequently, we find
$\cS'^\cC_{op}\leq 4$. One optimum strategy is $m_A=0, \forall x; m_B=y,\forall m_A$ and $c=m_B,\forall z$.

\begin{proposition}
    If Charlie's measurements are compatible and Bob uses compatible instruments as before, the value of the merit of \eqref{merit_s'},$\Bar{\cS}^{\cQ^\cC}\leq 4$.
\end{proposition}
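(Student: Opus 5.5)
The plan is to merge every device after Alice into one POVM on $\cH_A$, and then bound the merit with the operator-norm and trace splitting used in the proofs of Proposition \ref{prop3_1} and Proposition \ref{prop6}. Here $d_y=1$, so Bob's "instruments" are just channels $\Lambda_0,\Lambda_1\in\mathscr{C}(\cH_A,\cH_B)$. By assumption they have a joint channel $\Lambda\in\mathscr{C}(\cH_A,\cH_B\otimes\cH_B)$ with $\Lambda_0=\tr_2\circ\Lambda$ and $\Lambda_1=\tr_1\circ\Lambda$. Charlie's two measurements $N_0,N_1$ have a joint POVM $G=\{G(c,c')\}$ with $N_0(c)=\sum_{c'}G(c,c')$ and $N_1(c')=\sum_c G(c,c')$. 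I write $E_{yz}(c):=\Lambda_y^{\dagger}(N_z(c))$. Then the merit \eqref{merit_s'} is
\[
\cS'=\tr\big[\rho_0\big(E_{00}(0)+E_{01}(0)+E_{10}(1)\big)\big]+\tr\big[\rho_1\big(E_{00}(0)+E_{01}(1)\big)\big].
\]

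A first attempt would bound $E_{10}(1)$ trivially and work only with $\Lambda_0$ and $G$. This fails. With $\Lambda_0$ the identity, $N_0$ trivial and $N_1$ the computational basis measurement, the four $y=0$ terms already reach $4$. The fifth term vanishes only because $E_{10}(1)=\Lambda_1^{\dagger}(N_0(1))$ uses the same measurement $N_0$ whose outcome $0$ was exploited for $y=0$. So the key coupling is between $E_{00}(0)$ and $E_{10}(1)$, and it must come from the compatibility of $\Lambda_0$ and $\Lambda_1$.

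To capture this coupling I will use the joint channel. Write
\[
E_{00}(0)+E_{10}(1)=\Lambda^{\dagger}\big(N_0(0)\otimes\Id\big)+\Lambda^{\dagger}\big(\Id\otimes N_0(1)\big).
\]
Since $N_0(0)\otimes\Id+\Id\otimes N_0(1)=\Id\otimes\Id+N_0(0)\otimes N_0(1)-N_0(1)\otimes N_0(0)$, this sum is at most $\Id+Q$ with $Q:=\Lambda^{\dagger}(N_0(0)\otimes N_0(1))\geq0$. Moreover $Q\leq E_{00}(0)$ and $Q\leq E_{10}(1)$. Going further, the operators $F(c_0,c_1,c_0',c_1'):=\Lambda^{\dagger}(G(c_0,c_1)\otimes G(c_0',c_1'))$ form a single POVM on $\cH_A$, of which all the $E_{yz}(c)$ are marginals; this is the analogue of Proposition \ref{Prop:ins_comp_channel_comp}. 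I would then expand both bracketed operators in terms of $F$, obtaining coefficient functions $a_0(\mathbf c)\in\{0,1,2,3\}$ and $a_1(\mathbf c)\in\{0,1,2\}$. Following the paper's pattern, I would split off the outcomes where the coefficients are "too large" and bound them by a trace, since $\tr[\sum_{\mathbf c}F(\mathbf c)]=\tr\Id=2$. Everything else I would bound by operator norms $\leq\|\Id\|=1$.

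The main obstacle is making this bookkeeping tight enough to give $4$ rather than $5$. Naively bounding $a_0\leq 2+\chi_0$ and $a_1\leq 1+\chi_1$ gives $3+\tr\Id=5$. I would first try grouping asymmetrically: bound $\tr[\rho_0(E_{00}(0)+E_{10}(1))]\leq 1+\tr[\rho_0 Q]$ using the inequality above. The remaining quantity is then $\tr[\rho_0 Q]+\tr[\rho_0E_{01}(0)]+\tr[\rho_1(E_{00}(0)+E_{01}(1))]$. I would write $E_{01}(0)$ and $E_{01}(1)$ through the marginal $G(c,c')$ pulled back by $\Lambda_0$. Then I would apply the trace trick on the outcomes where $\rho_0$ and $\rho_1$ both collect weight, aiming to show this remainder is at most $\|\Id\|+\tr\Id/2\cdot 0+\ldots\leq 3$. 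If the analytic split does not close cleanly, the fallback is the qubit Bloch parametrization used in Appendix \ref{proof_5}. In that approach I would write each relevant effect as $\alpha\Id+\vec c\cdot\vec\sigma$, impose the joint-measurement constraints as linear constraints on these parameters, and maximize directly.
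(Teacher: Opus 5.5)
Your proposal has a genuine gap. The decisive estimate, that the remainder after your asymmetric grouping is at most $3$, is only announced. Worse, the tools you plan to use cannot deliver it, because nothing in the plan uses $\dim\cH_B=2$. The pulled-back POVM $F$ on $\cH_A$, the trace $\tr\Id_{\cH_A}=2$, operator norms bounded by $1$, and the inequality $E_{00}(0)+E_{10}(1)\leq\Id+Q$ all remain true if Bob sends Charlie a qutrit. In that case the value $5$ is attained with compatible devices on both sides. Take $\rho_x=\ket{x}\bra{x}$, $\Lambda_0(\rho)=\bra{0}\rho\ket{0}\,\ket{a}\bra{a}+\bra{1}\rho\ket{1}\,\ket{b}\bra{b}$ and $\Lambda_1(\rho)=\tr(\rho)\ket{e}\bra{e}$, with joint channel $\rho\mapsto\Lambda_0(\rho)\otimes\ket{e}\bra{e}$. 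For Charlie take $N_0(0)=\ket{a}\bra{a}+\ket{b}\bra{b}$ and $N_1(0)=\ket{a}\bra{a}$, with complementary second effects, all diagonal in an orthonormal basis $\{\ket{a},\ket{b},\ket{e}\}$ of $\mathbb{C}^3$. In this example $Q=\Id$ and your remainder equals $4$.

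Your diagnosis that the coupling ``must come from the compatibility of $\Lambda_0$ and $\Lambda_1$'' is also off. The appendix's own example uses compatible Bob channels (the identity and a replacement channel) and still reaches $3+\sqrt{2}$. Conversely, compatibility of Charlie's measurements alone forces the bound for arbitrary Bob channels, as shown below. Your Bloch fallback could only help if applied to Charlie's effects on $\cH_B$, not to operators on $\cH_A$ as in the proof of Proposition~\ref{prop5}. Note that the paper's own proof follows exactly the norm-splitting route you describe and breaks at the same point. Of its four norms, the first, $\|\sum_{k'}X(0,1,0,k')+\sum_{l,l',k'}X(l,0,l',k')+\sum_{l,k,k'}X(l,k,1,k')\|$, counts the outcomes $(l,0,1,k')$ twice, so it is not bounded by $\|\Id\|$. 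In the qutrit example it equals $2$ and the four norms sum to $5$.

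The missing idea is to impose the dimension constraint on the Bob--Charlie link via the Frenkel--Weiner theorem, which the paper already uses in Appendix~\ref{App_genuine_cert}. Since $N_0,N_1$ are compatible, each $p(c|x,y,z)$ is the $z$-th marginal of $p(c_0,c_1|x,y)=\tr[\Lambda_y(\rho_x)G(c_0,c_1)]$. These are the statistics of four qubit states read out by a single four-outcome POVM, and $\cS'$ is a linear functional of them. By Frenkel--Weiner, its value is at most the best value obtainable when the qubit is replaced by a classical bit $m_B$, with some encoding $A(m_B|x,y)$ and decoding $B(c_0,c_1|m_B)$. Because $n_A=d_A=2$, every such strategy is a classical sequential one: Alice sends $m_A=x$, Bob draws $m_B$ from $A(\cdot|x,y)$, and Charlie draws $(c_0,c_1)$ from $B(\cdot|m_B)$ and outputs $c_z$. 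Therefore $\cS'$ is at most the classical optimum $4$. To see this optimum for deterministic strategies: the two $y=0$, $z=1$ terms force Bob's $y=0$ messages for $x=0$ and $x=1$ to differ. The two $y=0$, $z=0$ terms then force Charlie's $z=0$ decoding to output $0$ on both messages, which kills the $y=1$ term. Compatibility of Bob's channels is never used.
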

\begin{proof}
    Let us take two compatible measurements $M(l)$ and $N(k)$ which are implemented by Charlie. As they are compatible, there exists a measurement $X(l,k)$ such that $\sum_k X(l,k)=M(l)$ and $\sum_l X(l,k)=N(k)$. The compatible instruments of Bob are $\cI_y=\{\Phi^y\}$. The instruments $\{\cI_y\}_y$ acts on the measurement $X$ and provide the measurements $X_0 :=\cI_0^\dagger [X]= \{X_0(l,k)= (\Phi^0)^\dagger X(l,k)\}$ and $X_1 :=\cI_1^\dagger [X]= \{X_1(l,k)= (\Phi^1)^\dagger X(l,k)\}$. Since $\cI_0\comp\cI_1$, it follows that $X_0\comp X_1$. Therefore, there exists a joint measurement $X=\{X(l,k,l',k')\}$ such that
$X_0(l,k)=\sum_{l',k'}X(l,k,l',k')$
and
$
X_1(l',k')=\sum_{l,k}X(l,k,l',k')$. Now, if Alice encodes her inputs in the states $\{\rho_{x}\}$ then we can write the  merit as ,
    \bea
&& \cS'^{\cQ^\cC}\nonumber\\
&=&\tr [\rho_0 ((\Phi^0)^\dagger M(0)+(\Phi^0)^\dagger N(0)+(\Phi^1)^\dagger M(1))\nonumber\\
&&+ \rho_1 ((\Phi^0)^\dagger M(0)+(\Phi^0)^\dagger N(1))]\nonumber\\
&\leq& || \sum_k X_0(0,k) + \sum_l X_0(l,0) + \sum_k X_1(1,k) ||\nonumber\\
&&+ ||\sum_k X_0(0,k) +\sum_l X_0 (l,1)||\nonumber\\
&=& || \sum_{k,l',k'} X(0,k,l',k') + \sum_{l,l',k'} X(l,0,l',k') + \sum_{l,k,k'} X(l,k,1,k') ||\nonumber\\
&&+ ||\sum_{k,l',k'} X(0,k,l',k') +\sum_{l,l',k'} X (l,1,l',k')||\nonumber\\
&=& ||\sum_{k'} X(0,1,0,k') + \sum_{l,l',k'} X(l,0,l',k') + \sum_{l,k,k'} X(l,k,1,k')||\nonumber\\
&&+||\sum_{k'} X(0,0,1,k')+ \sum_{k'} X(0,0,0,k')+\sum_{k'} X(0,1,1,k') ||\nonumber\\
&&+ ||\sum_{l',k'} X(0,0,l',k')+\sum_{l,l',k'} X (l,1,l',k')||\nonumber\\
&&+ ||\sum_{l',k'} X(0,1,l',k')||\nonumber\\
&\leq& 4||\Id_{\cH}||=4.
    \eea
\end{proof}
Consider a pair of qubit one-outcome quantum instruments $\cJ'_0=\Psi'^0_0=\rho$ and $\cJ'_1=\Psi'^1_0=\tr(\rho)\ket{-}\bra{-}$ .

Alice encodes her inputs into two states $\rho_0$ and $\rho_1$. Charlie executes measurement $\sigma_x$ for $z=0$ and $\sigma_z$ for $z=1$. Therefore, quantum figure of merit from \eqref{merit_s'},
\bea
&&\cS'^\cQ[\cJ'_0,\cJ'_1,\{\rho_0,\rho_1\},\{\sigma_x,\sigma_z\}]\nonumber\\
&=& \tr\Big[\rho_0 (\ket{+}\bra{+}+\ket{0}\bra{0})
+\rho_1(\ket{+}\bra{+}+\ket{1}\bra{1})\nonumber\\
&&+\ket{-}\bra{-}\ket{-}\bra{-}\Big]
\eea
 Using $\rho_0=\ket{\psi_0}\bra{\psi_0},\rho_1=\ket{\psi_1}\bra{\psi_1}$, where $\ket{\psi_0}=\cos\frac{\pi}{8}\ket{0}+\sin\frac{\pi}{8}\ket{1},\ket{\psi_1}=\sin\frac{\pi}{8}\ket{0}+\cos\frac{\pi}{8}\ket{1}$. With this strategy, $\cS'^\cQ[\cJ'_1,\cJ'_2,\{\rho_0,\rho_1\},\{\sigma_x,\sigma_z\}]=3+\sqrt{2}>4=\cS'^\cC_{op}$.

\section{Condition for the figure of merit of the genuine instrument incompatibility certification tasks}
\label{App_genuine_cert}

\begin{proposition}
Consider a sequential communication task in which the figure of merit
depends only on the probabilities $\{p(b_y|x,y)\}_{b_y,x,y}$, namely
\begin{equation}
    \widehat{\cS}=\sum_{x,y,b_y}\alpha(x,y,b_y)p(b_y|x,y).
\end{equation}
Suppose that the measurements $M_y=\{\Phi^y_{b_y}(\Id_{\cK})\}$ induced by the instruments $\cI_y=\{\Phi^y_{b_y}\}^{d_y}_{b_y=1}\in\mathscr{I}(\cH,\cK)$ implemented by Bob are compatible where $y=\{0,1\}$. Then no quantum advantage over the classical sequential strategy with the same communication dimensions is possible.
\end{proposition}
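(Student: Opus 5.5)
Since $\widehat{\cS}$ depends only on the marginals $p(b_y|x,y)=\tr[\rho_x M_y(b_y)]$, where $M_y(b_y)=(\Phi^y_{b_y})^{\dagger}(\Id_{\cK})$, the quantum value is determined entirely by the states $\{\rho_x\}\subset\cS(\cH_A)$ and the induced measurements $\{M_0,M_1\}$. Bob's quantum outputs and Charlie's measurement play no role. The plan is therefore to show that any such statistics with compatible $M_0\comp M_1$ can be reproduced classically with a $d_A$-dimensional message from Alice. I will then use this to show that $\widehat{\cS}^Q\leq\widehat{\cS}^C_{op}$ whenever the induced measurements are compatible.

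\textbf{Step 1: reduce to a single measurement.} Let $G=\{G(b_0,b_1)\}$ be the joint measurement of $M_0$ and $M_1$. Then $p(b_y|x,y)=\sum_{b_{\bar y}}\tr[\rho_x G(b_0,b_1)]$. So the scenario is equivalent to Alice preparing $\rho_x$ and Bob performing one fixed measurement $G$ and post-processing its outcome according to $y$. The figure of merit becomes
\[
\widehat{\cS}=\sum_{x}\sum_{b_0,b_1}\tr[\rho_x G(b_0,b_1)]\,\beta(x,b_0,b_1),\qquad \beta(x,b_0,b_1):=\sum_y\alpha(x,y,b_y).
\]

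\textbf{Step 2: bound by a classical strategy.} This is the step I expect to be the main obstacle. The naive route, in which Alice measures $G$ herself, gives $p(b_0,b_1|x)$, but the joint outcome set has $d_0d_1$ elements, which generally exceeds $d_A$. I would instead use the standard prepare-and-measure fact (Frenkel--Weiner) that, for any single fixed measurement on a $d_A$-dimensional system, every distribution $p(g|x)=\tr[\rho_x G(g)]$ can be reproduced by a classical $d_A$-level message with shared randomness. As a fallback that uses linearity of $\widehat{\cS}$ only, I would argue directly. First, optimise over states, so that each $\rho_x$ is the maximal eigenvector of $\sum_g\beta(x,g)G(g)$. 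Then write $G$ as a convex combination of extremal POVMs, which have rank-one elements and, in dimension $d_A$, at most $d_A^2$ outcomes; by linearity one extremal $G$ suffices. Next, bound $\sum_x\|\sum_g\beta(x,g)G(g)\|$. The goal is to show that this is at most the value of the best deterministic assignment of $d_A$ messages to inputs, followed by a choice of one outcome $g$ per message.

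\textbf{Step 3: realise the classical strategy.} The classical strategy obtained in Step 2 uses an encoding $p_e(m_A|x)$, a decoding $p(g|m_A)$ that Bob post-processes into $b_y$ according to $y$, an arbitrary message $m_B$, and arbitrary $p_d$. Because $\widehat{\cS}$ ignores $c$, this fits Eq. \eqref{c_stat}, so its value is at most $\widehat{\cS}^C_{op}$. Shared randomness is harmless here, since $\widehat{\cS}$ is linear and its maximum is attained at a deterministic strategy. Combining Steps 1--3 yields $\widehat{\cS}^Q\leq\widehat{\cS}^C_{op}$ whenever the induced measurements are compatible, i.e. no quantum advantage.
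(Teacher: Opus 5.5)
Your proposal is correct and follows the paper's route: replace $M_0,M_1$ by their joint measurement $G$, apply Frenkel--Weiner to the resulting single-measurement prepare-and-measure scenario with a $d_A$-level message, and let Bob classically coarse-grain $(b_0,b_1)\mapsto b_y$, with Charlie's output irrelevant. Your remark that the shared randomness in Frenkel--Weiner is harmless because $\widehat{\cS}$ is linear states explicitly a point the paper glosses over. The fallback in Step 2 is unnecessary and, as written, incomplete: the bound it leaves open is the entire content of Frenkel--Weiner, and extremal POVMs need not be rank-one unless you first refine $G$.
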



\begin{proof}
    As the induced measurements $M_y=\{\Phi^y_{b_y}(\Id_{\cK})\}$ are compatible, there exists a joint measurement $M=\{M(b_0,b_1)\}$ such that $M_0(b_0)=\sum_{b_1}M(b_0,b_1)$, and $M_1(b_1)=\sum_{b_0}M(b_0,b_1)$. Then if Alice encodes her input $x$ in state $\rho_x$, we have
\begin{align}
    p(b_0|x,0)&=\tr[\Phi^0_{b_0}(\rho_x)]=\tr[\rho_xM_0(b_0)]\nonumber\\
    &=\sum_{b_1}\tr[\rho_xM(b_0,b_1)]=\sum_{b_1}p(b_0,b_1|x)
\end{align}
where $p(b_0,b_1|x):=\tr[\rho_xM(b_0,b_1)]$. Similarly,
\begin{align}
    p(b_1|x,1)&=\tr[\Phi^1_{b_1}(\rho_x)]=\tr[\rho_xM_1(b_1)]\nonumber\\
    &=\sum_{b_0}\tr[\rho_xM(b_0,b_1)]=\sum_{b_0}p(b_0,b_1|x).
\end{align}

However, note that the probability distribution $\{p(b_0,b_1|x)\}$ can be obtained using quantum strategy if Alice encodes her inputs in states $\{\rho_x\}$ Bob performs a single measurement $M$ on it. Hence, the probabilities of $\{p(b_y|x,y)\}\forall x,y$ can also be obtained using a classical strategy \cite{fw} in the scenario given in Fig. \ref{fig:communication_scenario} (a) where Charlie announces a fixed output. Furthermore, once Bob receives the input $y$, he can recover the $\{p(b_y|x,y)\}$ from $\{p(b_0,b_1|x)\}$ using classical post-processing of outcomes. Hence, the quantum advantage cannot be obtained for any chosen $\alpha(x,y,b_y)$. 
\end{proof}

\begin{proposition}
Consider a sequential communication task in which the figure of merit
depends only on the probabilities $\{p(c|x,y)\}_{c,x,y}$, namely
\begin{equation}
    \widetilde{S}=\sum_{x,y,c}\alpha(x,y,c)\,p(c|x,y).
\end{equation}
Suppose that the channels $\{\Phi^y=\sum^{d_y}_{b_y=1}\Phi^y_{b_y}\}_{y}$ induced by the instruments the instruments $\cI_y=\{\Phi^y_{b_y}\}^{d_y}_{b_y=1}\in\mathscr{I}(\cH,\cK)$ implemented by Bob are compatible where $y=\{0,1\}$, and that the dimension of the system communicated from
Bob to Charlie satisfies $d_B\geq d_c$. Then no quantum advantage over the classical sequential strategy with the same communication dimensions is possible.
\end{proposition}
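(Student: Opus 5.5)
Since the channels $\Phi^0,\Phi^1$ induced by Bob's instruments are compatible, there is a joint channel $\Lambda\in\mathscr{C}(\cH,\cK\otimes\cK)$ with $\Phi^0=\tr_{2}\circ\Lambda$ and $\Phi^1=\tr_{1}\circ\Lambda$. Marginalising over $b_y$, the quantities entering $\widetilde S$ are $p(c|x,y)=\sum_{b_y}\tr[\Phi^y_{b_y}(\rho_x)N(c)]=\tr[\Phi^y(\rho_x)N(c)]=\tr[\rho_x(\Phi^y)^{\dagger}(N(c))]$. Thus $\widetilde S$ depends on Bob's instruments only through the two measurements $(\Phi^0)^{\dagger}[N]$ and $(\Phi^1)^{\dagger}[N]$ on $\cH_A$. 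The plan mirrors the preceding proposition for $\{p(b_y|x,y)\}$: show these two measurements are compatible, collapse everything into a single measurement on Alice's system, and then simulate classically.

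\textbf{Step 1 (compatibility of the effective measurements).} Define $G(c,c'):=\Lambda^{\dagger}(N(c)\otimes N(c'))$. Since $\Lambda^{\dagger}$ is CP and unital, $G$ is a valid measurement on $\cH_A$. Its marginals are $\sum_{c'}G(c,c')=\Lambda^{\dagger}(N(c)\otimes\Id)=(\Phi^0)^{\dagger}(N(c))$, and similarly $\sum_c G(c,c')=(\Phi^1)^{\dagger}(N(c'))$. This is the channel analogue of Proposition \ref{Prop:ins_comp_channel_comp}, with a single outcome per instrument. Consequently $p(c|0,y{=}0)$ and $p(c|x,1)$ are marginals of $q(c,c'|x):=\tr[\rho_x G(c,c')]$.

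\textbf{Step 2 (classical simulation).} The distribution $q(c,c'|x)$ is generated by a prepare-and-measure experiment with a $d_A$-dimensional quantum system and a $d_c^2$-outcome measurement. By the Frenkel--Weiner result \cite{fw}, such statistics can be reproduced with a $d_A$-dimensional classical message $m_A$ plus shared-randomness-free local processing: Alice sends $m_A$ with $p_e(m_A|x)$, and the receiver outputs $(c,c')$ with $p(c,c'|m_A)$. Bob, holding $m_A$ and $y$, samples $(c,c')$, keeps $c_y$ (namely $c$ if $y=0$, $c'$ if $y=1$), and forwards it as $m_B$. This is possible precisely because $d_B\ge d_c$. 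He outputs an arbitrary $b_y$, say $b_y=0$. Charlie then outputs $c=m_B$. This reproduces every $p(c|x,y)$, so $\widetilde S$ is classically attainable for every choice of $\alpha$, and there is no quantum advantage.

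\textbf{Main obstacle.} The delicate point is the use of \cite{fw}: its statement applies to a single measurement on a $d_A$-level system with arbitrary outcome number, reproduced by a $d_A$-level classical message. We must make sure it is applied to the joint measurement $G$ rather than to the two measurements separately, since separately they would require $y$-dependent encodings that Alice cannot produce. The remaining bookkeeping is routine, namely that the relay fits in the $d_B$-dimensional channel, and this is where the hypothesis $d_B\ge d_c$ enters.
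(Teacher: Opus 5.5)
Your route is the same as the paper's: a joint channel for $\Phi^0,\Phi^1$, the parent POVM $G(c,c')=\Lambda^{\dagger}(N(c)\otimes N(c'))$, Frenkel--Weiner applied to the single measurement $G$ rather than to the two marginals, and Bob relaying $c_y$, which is where $d_B\geq d_c$ enters. The step that fails as written is your reading of Frenkel--Weiner in Step 2.

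That theorem does not say that $q(c,c'|x)$ can be reproduced \emph{exactly} by a $d_A$-valued classical message with only local randomness. It says that, for every linear payoff, the optimal quantum value equals the optimal classical one. Equivalently, the quantum statistics lie in the \emph{convex hull} of classical $d_A$-message strategies, so exact simulation in general needs randomness shared between Alice and the decoder.

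Without shared randomness, exact simulation can genuinely fail. Take qubit trine states $\rho_x$ with the POVM $\{\frac{2}{3}(\Id-\rho_z)\}_z$. They give $p(z|x)=\frac{1}{2}(1-\delta_{xz})$, which is a rank-$3$ matrix. Any strategy of the form $\sum_{m\in\{0,1\}}p_e(m|x)p_d(z|m)$ yields a matrix of rank at most $2$. Since the classical model \eqref{Eq:prob_class_strat} contains no shared randomness, your conclusion ``this reproduces every $p(c|x,y)$'' is not justified.

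The repair is short and is exactly how the paper uses the result. Since $p(c|x,0)=\sum_{c'}q(c,c'|x)$ and $p(c'|x,1)=\sum_{c}q(c,c'|x)$, you have $\widetilde S=\sum_{x,c,c'}[\alpha(x,0,c)+\alpha(x,1,c')]\,\tr[\rho_x G(c,c')]$. This is a linear payoff in a single prepare-and-measure experiment. Frenkel--Weiner then gives $p_e(a|x)$ and $q(c,c'|a)$, with $a\in[d_A]$, attaining at least the quantum value, and your relay construction goes through unchanged.

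Equivalently, you can simulate $q$ with shared randomness, relay each deterministic component separately, and keep the best one. By linearity it does at least as well, and it uses no shared randomness. With that change your proof is complete and coincides with the paper's.
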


\begin{proof}
For simplicity, consider two inputs $y\in\{0,1\}$ for Bob; the
generalization to an arbitrary number of inputs is immediate.
Let $M=\{M(c)\}_{c=0}^{d_c-1}$ be Charlie's measurement. Since
$\Phi^0$ and $\Phi^1$ are compatible, there exists a joint channel
$\Gamma\in\mathscr{C}(\cH,\cK_0\otimes\cK_1)$ with $\cK_0=\cK_1=\cK$ such that
\begin{equation}
    \Phi^0=\tr_{\cK_1}\circ\Gamma,
    \qquad
    \Phi^1=\tr_{\cK_0}\circ\Gamma .
\end{equation}
Consequently, the effective measurements
\begin{equation}
    E_{c|y}:=(\Phi^y)^\dagger[M(c)]
\end{equation}
are compatible. A parent measurement is given by
\begin{equation}
    G(c_0,c_1)
    :=
    \Gamma^\dagger\left(
        M(c_0)\otimes M(c_1)
    \right),
\end{equation}
so that
\begin{equation}
    E_{c_0|0}=\sum_{c_1}G(c_0,c_1),
    \qquad
    E_{c_1|1}=\sum_{c_0}G(c_0,c_1).
\end{equation}
Hence, for Alice's encoding states $\{\rho_x\}$, we have
\begin{equation}
    p(c|x,0)
    =
    \sum_{c_1}
    \tr\left[\rho_x G(c,c_1)\right], 
\end{equation}
and
\begin{equation}
    p(c|x,1)
    =
    \sum_{c_0}
    \tr\left[\rho_x G(c_0,c)\right]. 
\end{equation}
Therefore,
\begin{align}
    \widetilde{S}
    &=
    \sum_{x,c_0,c_1}
    \beta(x,c_0,c_1)
    \tr\left[\rho_x G(c_0,c_1)\right],
    \label{eq:single-povm-payoff}
\end{align}
where
\begin{equation}
    \beta(x,c_0,c_1)
    :=
    \alpha(x,0,c_0)\delta_{c,c_0}+\alpha(x,1,c_1)\delta_{c,c_1}.
\end{equation}
Equation~\eqref{eq:single-povm-payoff} is a standard
prepare-and-measure communication task in which Alice communicates
a $d_A$-dimensional system and the receiver performs a single fixed
measurement. By the result of Frenkel and Weiner~\cite{fw},
for every linear figure of merit the optimal value obtainable with
a $d_A$-dimensional quantum system can also be attained with a
$d_A$-valued classical message. Thus, there exists a classical
strategy described by $p(a|x)$, with $a\in[d_A]$, and
$q(c_0,c_1|a)$ which attains at least the value in
Eq.~\eqref{eq:single-povm-payoff}.

This classical prepare-and-measure strategy can be implemented in
the original sequential scenario. Alice sends $a$ to Bob. After
receiving $y$, Bob samples $(c_0,c_1)$ according to
$q(c_0,c_1|a)$ and sends the symbol
\begin{equation}
    m_B=c_y
\end{equation}
to Charlie. Since $d_B\geq d_c$, this encoding is allowed. Charlie
simply outputs $c=m_B$. Hence, the same value of the figure of merit
is attained by a classical sequential strategy with the prescribed
communication dimensions.
It follows that, whenever Bob's channels are compatible,
\begin{equation}
    \widetilde{S}^{Q_{\rm C}}\leq \widetilde{S}^{C}_{\rm op},
\end{equation}
and therefore, compatible channels cannot provide a quantum
advantage.
\end{proof}

\section{Certification of measurement incompatibility and channel incompatibility as a special case}

The certification of measurement incompatibility in measure-and-prepare scenario (i.e., the scenario depicted in Fig. \ref{fig:communication_scenario} (c) which is a special of Fig. \ref{fig:communication_scenario} (a)) has already been studied in Ref. \cite{Carmeli_2020,saha23}. Therefore, we will not discuss it here. Regarding channel incompatibility certification (i.e., the scenario depicted in Fig. \ref{fig:communication_scenario} (b) which is a special of Fig. \ref{fig:communication_scenario} (a)), consider the following figure of merit where $y,c,d_A,d_B\in\{0,1\}$ and $x_y\in\{0,1\}~\forall y$.

\bea\label{fig_merit}
 \widetilde{\cS}=\frac{1}{8}\sum_{x,y,c}\delta_{c,x_y}p(c|x,y)=\frac{1}{8}\sum_{x,y}p(c=x_y|x,y) \label{Eq:chan_incomp_task}.
 \eea

As this figure of merit is similar to RAC, it can easily be shown that the classical upper bound is $\frac{3}{4}$. Furthermore, since the figue of merit depends on $\{p(c|x,y)\}$ with $d_B=d_c=2$ from the discussion of the above section, we conclude that if Bob uses two compatible channel, the quantum advantage cannot be achieved. Now, we show that there exists a pair of incompatible channels that provides quantum advantage. 

We know that the identity channel is incompatible with any other unitary channel \cite{Heinosaari_incomp_channel_2017}. Now, consider $x$ as a two bit string. Next, suppose, Bob is implementing the qubit idenitity channel $\Lambda_1=\mathbbm{I}_{\cH_Q}$ and a qubit unitary channel $\Lambda_2$ defined via its action on an arbitrary qubit state $\rho$, $\Lambda_2(\rho)=\textbf{H}\rho\textbf{H}$ for $y=0,1$ respectively where $\textbf{H}$ is Hadamard matrix, Charlie is performing the measurement $M=\{\ket{0}\bra{0},\ket{1}\bra{1}\}$, and Alice is encoding her input in the states $\{\rho_x=\ket{\psi_x}\bra{\psi_x}\}$ where four pure qubit states $\ket{\psi_{00}}=\cos\frac{\pi}{8}\ket{0}+\sin\frac{\pi}{8}\ket{1}$, $\ket{\psi_{01}}=\cos\frac{\pi}{8}\ket{0}-\sin\frac{\pi}{8}\ket{1}$, $\ket{\psi_{10}}=\sin\frac{\pi}{8}\ket{0}+\cos\frac{\pi}{8}\ket{1}$, and $\ket{\psi_{11}}=\sin\frac{\pi}{8}\ket{0}-\cos\frac{\pi}{8}\ket{1}$. Then it can be easily shown that the figure of merit given in Eq. \ref{Eq:chan_incomp_task} is $\frac{1}{2}(1+\frac{1}{\sqrt{2}})>\frac{3}{4}$. Hence, incompatibility of $\Lambda_1$ and $\Lambda_2$ is certified.

\end{document}